    \newtheorem{assumption}{Assumption}
    \newtheorem{remark}{Remark}
    \newtheorem{theorem}{Theorem}
\newcommand{\p}{\mathrm{p}}
\newcommand{\pr}{\mathrm{pr}}
\renewcommand{\r}{\mathrm{r}}
\newcommand{\st}{\mathop{\rm s.t.}\nolimits}
\newcommand{\R}{\mathbb{R}}
\newcommand{\N}{\mathbb{N}}
\newcommand{\UU}{\mathcal{U}}
\newcommand{\YY}{\mathcal{Y}}
\newcommand{\XX}{\mathcal{X}}
\newcommand{\open}[1]{{\rm int}(#1)}
\newcommand{\T}{^\top}
\newcommand{\matrice}[2]{\left[\hspace*{-.1cm}\begin{array}{#1} #2 \end{array}\hspace*{-.1cm}\right]}
\def\moveEq#1{{}\mkern#1mu} 
\def\vv#1{{ \rm \bf{#1}}} 
\def\nx{{n_x}}
\def\nxp{{n_{x_\p}}}
\def\nu{{n_u}}
\def\ny{{p}}
\def\nd{{n_d}}
\def\ndx{{n_{d_x}}}
\def\ndy{{n_{d_y}}}
\def\nth{{n_\theta}}
\begin{document}
\title{\LARGE \bf Learning disturbance models for offset-free reference tracking}%

\author{Pablo~Krupa,~Mario~Zanon,~Alberto~Bemporad%
\thanks{This work has received support from the European Research Council (ERC), Advanced Research Grant COMPACT (Grant Agreement No. 101141351).
The authors are with the IMT School for Advanced Studies, Lucca, Italy. Emails: {\tt\small \{pablo.krupa, mario.zanon, alberto.bemporad\}@imtlucca.it}. Corresponding author: Pablo Krupa.
The authors wish to thank Vittorio Mattei for initial discussions on the topic of this paper.%
}}%

\pagestyle{fancy}
\maketitle
\thispagestyle{fancy}

\begin{abstract}
This work presents a nonlinear control framework that guarantees asymptotic offset-free tracking of generic reference trajectories by learning a nonlinear disturbance model, which compensates for input disturbances and model-plant mismatch.
Our approach generalizes the well-established method of using an observer to estimate a constant disturbance to allow tracking constant setpoints with zero steady-state error.
In this paper, the disturbance model is generalized to a nonlinear static function of the plant's state and command input, learned online, so as to perfectly track time-varying reference trajectories under certain assumptions on the model and provided that future reference samples are available.
We compare our approach with the classical constant disturbance model in numerical simulations, showing its superiority.
\end{abstract}

\begin{IEEEkeywords}
    Offset-free reference tracking, nonlinear model predictive control, extended Kalman filter, disturbance model
\end{IEEEkeywords}

\section{Introduction} \label{sec:introduction}

Control techniques can be grouped in two main categories: model-based and model-free techniques.
The latter ones can achieve the control objective without necessarily taking advantage or needing a model of the system. Examples can be PID control \cite{chong2005} and model-free reinforcement learning \cite{recht2019}.
However, these methods have some limitations, such as the difficulty to guarantee stability or safety with respect to given constraints without a model.
This motivates the introduction of model-based techniques.
Among them, Model Predictive Control (MPC) is a well-known optimization-based technique \cite{RMD17, borrelli2017}.
At each sample time, MPC solves a finite-horizon optimal control problem using a prediction model and the new measurement (or estimate) of the system state.
In MPC, offsets with respect to the desired reference might occur due to mismatches between the prediction model and the (unknown) system dynamics, or due to unmeasured disturbances.

Offset-free MPC schemes are used to reject both model mismatch and unknown disturbances, leading, as the name suggests, to offset-free tracking of the desired reference.
Most of these schemes rely on using an augmented state-disturbance model (see \cite{pannocchia2015_linear} for other formulations), leading to an augmented state that is estimated in order to reject the disturbance.
Several works have been done in the field of linear and nonlinear MPC to attain offset-free tracking of piecewise-constant references, leading to an established theory; see \cite{pannocchia2015_linear, pannocchia2015, pannocchia2007, morari2012, betti2013}.
However, new results in this field are still abundant.
For instance, in \cite{bonassi2021} the authors use a gated recurrent unit neural network to identify the system and use it as a prediction model for achieving offset-free tracking of constant references.
In \cite{bonassi2022}, the authors use a NARX neural network for the same purposes as \cite{bonassi2021}.
In \cite{son2022}, an artificial neural network is used to model the disturbances at steady-state, but in the context of linear MPC.
In \cite{caspari2021}, the authors propose an approach for the generation of a disturbance model by taking advantage of sufficient observability conditions and solve a semi-infinite program offline in order to retrieve a disturbance model that can be used online for offset-free nonlinear MPC. 
In \cite{mirasierra_IFAC_2020}, the authors propose a modifier-adaptation approach that achieves offset-free tracking for periodic reference trajectories.

In this paper, we present theoretical conditions under which offset-free tracking of time-varying reference trajectories can be achieved by exploiting a nonlinear disturbance model in the context of nonlinear control. 
We extend previous results in offset-free tracking into a more general form, where we assume to have a preview of future reference signals in order to also achieve offset-free tracking of non-constant reference trajectories; in contrast to previous results, which consider constant references or periodic reference trajectories, cf.~\cite{pannocchia2015_linear, pannocchia2015, pannocchia2007, morari2012, betti2013, bonassi2021, bonassi2022, son2022, caspari2021, mirasierra_IFAC_2020}.
We establish theoretical conditions for offset-free tracking using a nonlinear static disturbance model, which is learned online.
Thus, we can see this approach as ``grey-box", since it mixes an offline white-box state-space model of the system with the online estimation of a set of parameters for the nonlinear disturbance function.
Although the theoretical assumptions for offset-free tracking may be restrictive in a general case, we show how the proposed theoretical framework can provide good results when applied to the frequently used combination of a nonlinear MPC with an Extended Kalman Filter (EKF) as a state observer.
In particular, we present numerical results that show how this setup can be used to train a disturbance model online, leading to offset-free tracking of generic reference trajectories if the disturbance model satisfies the required theoretical assumptions.
Furthermore, we show results using a recurrent neural network as the disturbance model, which we train online following a similar approach to~\cite{bemporad_TAC_2023}.
The experiments indicate that, even when the theoretical assumptions are not fully satisfied, the proposed disturbance model can outperform the classical constant disturbance model~\cite{pannocchia2015_linear, morari2012}, in the context of nonlinear offset-free MPC.

The paper is organized as follows.
In Section~\ref{sec:problem} we present the problem formulation along with the assumptions and theoretical results that lead to offset-free tracking of time-varying reference trajectories.
We then show now the combination of nonlinear MPC and EKF can be applied to this control objective in Section~\ref{sec:NMPC_EKF}.
In particular, we show that the EKF can be used to learn the parameters of the disturbance model online, achieving very good tracking results even when the theoretical assumptions presented in Section~\ref{sec:problem} are not fully satisfied.
We conclude with Section~\ref{sec:no_guarantees}.

\noindent\textbf{Notation:~}
Given a set $S\subseteq \R^n$, $\open{S}$ denotes its interior.
$x(k|j) \in \R^n$ is the estimate of $x \in \R^n$ at time $k \in \N$ given the information available at time $j \in \N$.
The natural numbers $\N$ include $0$, and $\N_i^j \doteq \{i, i+1, \dots j-1, j\}$.
Given $x \in \R^n$ and some positive semidefinite matrix $Q \in \R^{n \times n}$, $\| x \|^2_Q = x\T Q x$.

\section{Offset-free reference tracking} \label{sec:problem}

We consider the problem of achieving offset-free reference tracking of an unknown nonlinear system.
We establish sufficient conditions under which a nonlinear controller and observer can achieve this goal using a nonlinear nominal model augmented with a suitably selected disturbance~model.

\subsection{The real system}
We assume that the controlled process is described by the discrete-time nonlinear dynamics
\begin{equation} \label{eq:plant}
	\begin{split}
		x_\p(k+1) & = f_\p(x_\p(k),u(k)),\\
		y_\p(k) & = g_\p (x_\p (k)),
	\end{split}
\end{equation}
where $x_\p \in \R^\nxp$, $u \in \R^\nu$ and $y_\p \in \R^\ny$, denote, respectively, the state, input and output of the process
and $k \in \N$ is the sampling instant.
We assume that we do not know the functions $f_\p \colon \R^\nxp \times \R^\nu \to \R^\nxp$ and $g_\p \colon \R^\nxp \to \R^\ny$. We also assume
that only input and output measurements are available, i.e., we cannot directly access the state vector $x_\p(k)$, whose dimension $\nxp$ may also be unknown.

Our aim is to design a controller that makes the output $y_\p(k)$ of plant~\eqref{eq:plant}
track a given generic reference signal $\{r(k)\}_{k=0}^\infty$, under the following
input and output constraints
\begin{equation} \label{eq:constraints}
    u(k)\in\UU,\quad y_\p(k)\in\YY,\quad \forall k \in \N,
\end{equation}
where $\UU \subseteq \R^\nu$, $\YY \subseteq \R^\ny$ are nonempty.
Since the objective is to achieve offset-free tracking, we make the following two standing assumptions, which are obvious requirements to be able to track the time-varying reference $r(k)$ asymptotically with zero error while satisfying the system constraints.

\begin{assumption} \label{ass:trackability}
The reference signal $\{r(k)\}_{k=0}^\infty$ satisfies $r(k)\in \open{\YY}$, $\forall k \in \N$.
Furthermore, there exist trajectories $\{x_{\pr}(k)\}_{k=0}^\infty$ and $\{u_\r(k)\}_{k=0}^\infty$ such that 
\begin{subequations}
\begin{align}
    x_{\pr}(k+1)&=f_\p(x_{\pr}(k),u_\r(k)),\\
    r(k)&=g_\p(x_{\pr}(k)),
\end{align}
\label{eq:xpr}%
\end{subequations}
and $u_\r(k)\in\UU$ for all $k \in \N$.
\end{assumption}

\begin{assumption} \label{ass:stabilizability}
Let $\{r(k)\}_{k=0}^\infty$ satisfy Assumption~\ref{ass:trackability} and $\{x_{\pr}(k)\}_{k=0}^\infty$, $\{u_\r(k)\}_{k=0}^\infty$ denote a corresponding pair of state and input trajectories.
There exists a nonempty set $\XX_{\p}^0 \subseteq \R^\nxp$ of initial states such that for each $x_\p(0) \in \XX_{\p}^0$ there exists an input trajectory $\{u(k)\}_{k=0}^\infty$, $u(k)\in\UU$, for which \eqref{eq:plant} satisfies $\lim\limits_{k\rightarrow\infty} x_\p(k)-x_{\pr}(k) = 0$ and $y_\p(k)\in\open{\YY}$ for all $k \in \N$.
\end{assumption}

Assumption~\ref{ass:trackability} is a necessary condition for perfect tracking under strict feasibility of the corresponding input and output trajectories.
The assumption requires the reference trajectory $r(k)$ to be a feasible reference for the real system dynamics~\eqref{eq:plant} and constraints~\eqref{eq:constraints}.
Since the real system dynamics are unknown, this might be difficult to verify in general.
However, in many practical settings it may be possible to verify Assumption~\ref{ass:trackability} due to sufficient knowledge of the real system or to historical data.
Assumption~\ref{ass:stabilizability} is an unrestrictive assumption stating that there exist some set of initial states from which the system can be asymptotically steered to the reference.
In practice this is not restrictive, since it is commonly assumed that systems are controlled starting from a viable initial state that is close enough to the desired reference.

\subsection{Control-oriented model and estimation}

As is often the case in model-based controllers, we consider a nominal prediction model with disturbance $d \in \R^\nd$:
\begin{equation} \label{eq:model}
	\begin{aligned}
		x(k+1) &= f(x(k),u(k),d(k)), \\
        d(k) &= h(x(k),u(k), \theta(k)), \\
		y(k) &= g(x(k),d(k)),
    \end{aligned}
\end{equation}
where $x \in \R^\nx$ is the state of the prediction model, $u \in \R^\nu$ and $y \in \R^\ny$ are, respectively, the input and the output of~\eqref{eq:plant}, and
$f \colon \R^\nx \times \R^\nu \times \R^\nd \to \R^\nx$, $g \colon \R^\nx \times \R^\nd \to \R^\ny$ are parametrized by $\theta \in \R^\nth$.
Vector $d\in\R^\nd$ is a disturbance that affects the model and is generated by a parametric disturbance function $h \colon \R^\nx \times \R^\nu \times \R^\nth \to \R^\nd$ whose parameters $\theta$ are estimated online.
We note that we make no assumption on the dimension $\nx$ with respect to the dimension $\nxp$.

Model~\eqref{eq:model} is the combination of a nominal model of the system $f$, $g$, obtained offline, and the disturbance function $h$ of the parameter vector $\theta$ that is learned online. 
We note that the use of a prediction model learned offline is the typical approach in model-based controllers, such as MPC~\cite{RMD17}.
Functions $f$ and $g$ may be obtained, for instance, from system identification, leading to a prediction model that balances prediction accuracy and complexity.
A consequence of this balance is a mismatch between the prediction model and the real system dynamics~\eqref{eq:plant}, which generally leads to offset when tracking a given reference.
In offset-free MPC for piecewise-constant references, this is solved by augmenting the nominal prediction model $f$, $g$, with a disturbance $d \in \R^\nd$ that is estimated online, see, e.g.,~\cite[\S 2.3]{pannocchia2015}.
In this paper, we augment the nominal prediction model $f$, $g$, with a disturbance function $h$ whose parameters $\theta$ will be learned online using an estimator to achieve offset-free tracking.
The approach is analogous to the one used in offset-free MPC for piecewise-constant references, but with the additional complexity of $h$ required to achieve this goal for time-varying references.

Model~\eqref{eq:model} enables a trade-off between offline identification, which captures the most relevant plant dynamics, and online model adaptation to cope with unknown disturbances and model-plant mismatches.
For instance, if a nominal model 
\begin{equation*}
		x(k+1) = f_n(x(k),u(k)), \quad y(k) = g_n(x(k)),
\end{equation*}
is obtained from system identification, then $f$ and $g$ can be obtained as $f = f_n + h_x$ and $g = g_n + h_y$, with suitably selected disturbance functions $h_x(\cdot)$, $h_y(\cdot)$.
We consider model~\eqref{eq:model} to provide a more general problem setup.

In order to achieve offset-free tracking, we consider the following two assumptions on model~\eqref{eq:model}.

\begin{assumption} \label{ass:continuity}
Function $g$ is continuous, function $f$ is continuous with respect to $(x,d)$, and function $h$ is continuous with respect to $(x,\theta)$.
\end{assumption}

\begin{assumption} \label{ass:perfect-modeling}
Let $\{r(k)\}_{k=0}^\infty$ satisfy Assumption~\ref{ass:trackability}
and $\{u_\r(k)\}_{k=0}^\infty$ denote a corresponding
input trajectory. 
There exists $\theta_\r\in\R^\nth$ and a state trajectory $\{x_\r(k)\}_{k=0}^\infty$ such that
\begin{equation} \label{eq:perfect-tracking}
    \begin{aligned}
    x_\r(k+1)&=f(x_\r(k),u_\r(k),d_\r(k)), \\
    d_\r(k)&=h(x_\r(k),u_\r(k),\theta_\r), \\
    r(k)&=g(x_\r(k),d_\r(k)), \; \forall k \in \N.
    \end{aligned}
\end{equation}
\end{assumption}

Assumption~\ref{ass:continuity} is a mild technical requirement for the proofs reported in the sequel.
We note that Assumption~\ref{ass:continuity} might be insufficient for many controllers and/or observers from the literature, which might require additional assumptions on the functions of model~\eqref{eq:model}.
Assumption~\ref{ass:perfect-modeling} guarantees that, limited to perfect tracking conditions, model~\eqref{eq:model} is versatile enough to reproduce the output reference signal $r(k)$ when excited by the same associated reference input $u_\r(k)$. 
Note that in the case of constant references $r(k)\equiv\bar r$ if $u_\r(k)\equiv\bar u_\r$ and $\bar x_{\pr}$ are such that $\bar x_{\pr}=f_\p(\bar x_{\pr},\bar u_\r)$, $\bar r=g_\p(\bar x_{\pr})$, then, Assumption~\ref{ass:perfect-modeling} always holds for the classical additive output disturbance model \cite{pannocchia2015_linear}
    $x(k+1)=f(x(k),u(k))$,
    $d(k)=\theta(k)$,
    $y(k)=g(x(k))+d(k)$,
as any $\bar x_\r$, $\bar \theta_\r$ such that $\bar x_\r=f(\bar x_\r,\bar u_\r)$ and $\bar \theta_\r=\bar r-g(\bar x_\r)$ makes~\eqref{eq:perfect-tracking} be satisfied, where clearly $\bar\theta_\r$ represents a term to correct the plant/model mismatch of the output vector at steady-state.
For this reason, Assumption~\ref{ass:perfect-modeling} is usually not explicitly reported in the literature on offset-free MPC, while we need to introduce it here to handle the more general time-varying reference setting.
Choosing a disturbance model $h(\cdot)$ such that Assumption~\ref{ass:perfect-modeling} is known to be satisfied may not be possible in many practical settings due to a lack of knowledge of the real system dynamics.
However, as we illustrate in the numerical results of Section~\ref{sec:results}, the use of a general-purpose disturbance model can provide good tracking performance, even if Assumption~\ref{ass:perfect-modeling} is not fully satisfied.

To estimate the state $x(k)$ and parameters $\theta(k)$ online, we rely on an observer that
delivers the estimates
\begin{subequations} \label{eq:observer}
\begin{equation} \label{eq:xkk}
	\matrice{c}{x(k|k) \\ \theta(k|k)} = \matrice{c}{x(k|k-1) \\ \theta(k|k-1)} + \omega(k,e(k)),
\end{equation}
based on the output prediction error
\begin{align} \label{eq:output_error}
	e(k) \doteq y_\p(k) - g(x(k|k-1), d(k|k-1)),
\end{align}
where the measurement-update function $\omega \colon \N \times\R^\ny \to \R^{\nx + \nth}$ provides the correction term due to the output prediction error.
We assume that the time-update function of the observer is
\begin{align}
	d(k|k-1) &= h(x(k|k-1),u(k),\theta(k|k-1)), \\
	d(k|k) &= h(x(k|k),u(k),\theta(k|k)), \\
    \matrice{c}{x(k+1|k) \\ \theta(k+1|k)} &= \matrice{c}{f(x(k|k), u(k), d(k|k)) \\ \theta(k|k)}.
\end{align}%
\end{subequations}

The following assumption is a standard requirement for any well-posed observer, cf., e.g.,~\cite[Assumption 9]{pannocchia2015}.

\begin{assumption} \label{ass:observer}
	The observer-update function $\omega$ satisfies $\omega(k,0)=0$, for all $k\in\mathbb{N}$, and $\omega$ is continuous with respect to its second argument in a neighborhood of the origin. 
\end{assumption}

\begin{remark}[Comparison with constant disturbance models]
The case of constant disturbance models (see, e.g.,~\cite{pannocchia2015})
\begin{equation}
\begin{aligned}
	x(k+1)&=F(x(k),u(k),d(k)),\; y(k)=G(x(k),d(k)), \\
	d(k+1)&=d(k),
\end{aligned}
\label{eq:constant-dist}
\end{equation}
is a special case of the general disturbance model~\eqref{eq:model}, obtained by setting  $\nth=\nd$, $f(x,u,d)=F(x,u,d)$, $g(x,d)=G(x,d)$, $d(k) = \theta(k)$ and $h(x,u,\theta)=\theta$. 
Vice versa, given the model in~\eqref{eq:model}, one can obtain the model in~\eqref{eq:constant-dist} by setting $\nd=\nth$, $d(k)=\theta(k)$, $F(x,u,d)=f(x,u,h(x,u,d))$, and $G(x,d)=g(x,h(x,u,d))$. 
It is therefore apparent that the two disturbance
modeling frameworks are mathematically equivalent. However, our framework has the advantage of being more structured, as it explicitly models the disturbance vector as the output of a parametric nonlinear model of the state and input vectors, with $\theta$ being the vector of disturbance model parameters, while in~\eqref{eq:constant-dist}, the underlying modeling assumption is that the disturbance is an unknown constant, which is fully justified by the fact that the emphasis is on compensating steady-state errors when tracking constant set-points.
We remark also that in~\cite{pannocchia2015} the authors assume that the number $\nx$ of model states is equal to the order $\nxp$ of the plant state $x_\p$, while in this paper we do not make such an assumption.
We do not even assume that $\nxp$ is known.
\end{remark}

\subsection{Nonlinear controller}
We assume that a nonlinear controller 
\begin{equation} \label{eq:nlcon}
    u(k)= \kappa (x(k),r(k),\theta(k)),
\end{equation}
$\kappa \colon \R^\nx \times\R^\ny \times \R^\nth \to\UU$, has been designed fulfilling the following assumption.

\begin{assumption} \label{ass:contr}
Consider any $\{r(k)\}_{k=0}^\infty$ satisfying Assumption~\ref{ass:trackability} and any constant value $\theta(k) \equiv \bar \theta$.
Let $\{\Delta x(k),\Delta\theta(k)\}$ be vanishing perturbations, i.e., 
\begin{equation*}
    \lim_{k\rightarrow\infty}\Delta x(k) =0, \quad \lim_{k\rightarrow\infty}\Delta \theta(k) =0.
\end{equation*}
Then, applying $u(k)= \kappa (x(k)+\Delta x(k), r(k), \theta(k)+\Delta\theta(k))$ makes the output of the nominal model~\eqref{eq:model} track the reference asymptotically without errors, i.e.,
\begin{equation} \label{eq:no-ss-error}
    \lim_{k\rightarrow\infty}g(x(k),d(k))-r(k)=0,
\end{equation}
with $y(k) \in \open{\YY}$ for all $k \in \N$.
\end{assumption}

Note that Assumption~\ref{ass:contr} implies that $u(k)\in\UU$ by definition, and that the closed-loop system constituted by~\eqref{eq:model} and~\eqref{eq:nlcon} is intrinsically robust to vanishing state perturbations affecting both the evolution of the nominal model and the state-feedback signals to the controller.
Such a view of the actual control system is depicted in Fig.~\ref{fig:control_loop}, which shows our reinterpretation of the control system as a state-feedback loop with state $(x(k|k-1), \theta(k|k-1))$, output $y(k|k)$ and input $u(k)$ generated by the controller~\eqref{eq:nlcon} under the feedback perturbations 
\begin{equation}
    \matrice{c}{\Delta x(k)\\ \Delta\theta(k)} = \omega(k,e(k)),
\label{eq:Delta-x-th}
\end{equation}
caused by the real plant through the measurement-update mapping~\eqref{eq:xkk} of the observer.
Clearly, when $e(k)=0$ the measurement update block becomes an all-pass filter having no effect on the control loop, which recovers its nominal behavior given by the evolution of the model equations~\eqref{eq:model} under the input $u(k)$ generated by~\eqref{eq:nlcon}.
In other words, the dynamics of the plant become completely irrelevant in the way the state observer and controller evolve when $e(k)=0$.
Indeed, the following theorem shows that if $e(k)$ vanishes asymptotically then the output $y_\p(k)$ of plant~\eqref{eq:plant} perfectly tracks $r(k)$.

\begin{figure}[t]
\centering
    \includegraphics[width=0.95\columnwidth]{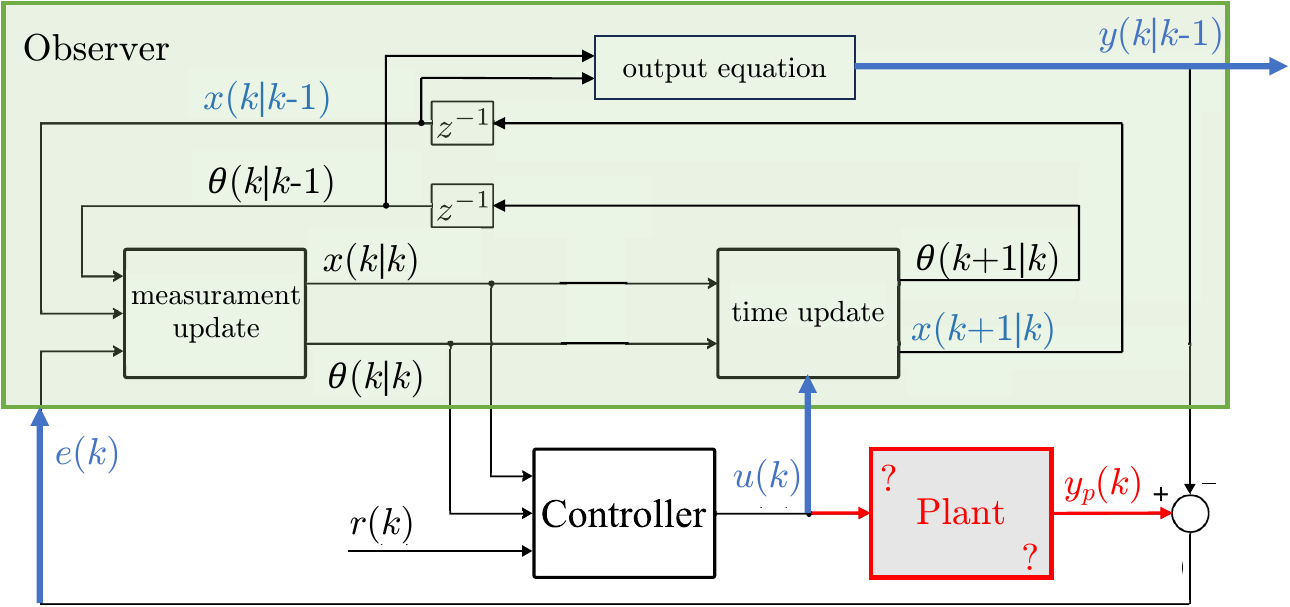}
    \caption{Interpretation of the closed-loop system from the point of view of the nominal model, where the plant is seen as a generator of the ``disturbance" $e(k)$ whose effect is rejected by the observer thanks to the disturbance model.}
    \label{fig:control_loop}
\end{figure}

\begin{theorem} \label{th:main}
Consider the closed-loop system constituted by~\eqref{eq:plant} under the control law $u(k) = \kappa (x(k|k),r(k),\theta(k|k))$, where $x(k|k)$ and $\theta(k|k)$ are obtained by~\eqref{eq:observer}.
Let Assumptions~\ref{ass:trackability}--\ref{ass:contr} hold. 
Then, for any initial plant state $x_\p(0)\in\XX_{\p}^0$, convergence of the nonlinear observer~\eqref{eq:observer}, i.e., 
\begin{equation}
	\lim_{k\rightarrow\infty} e(k)=0,
\label{eq:e(k)_asympt}
\end{equation}
implies asymptotic perfect tracking
\begin{equation}
	\lim_{k\rightarrow\infty} y_\p(k)-r(k)=0.
\label{eq:y(k)_asympt}
\end{equation}
Moreover, $u(k)\in\UU$ for all $k\geq 0$ and there exists a time index $k_f\geq 0$ such that $y_\p(k)\in\YY$ for all $k\geq k_f$.
\end{theorem}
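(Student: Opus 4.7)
The plan is to exploit the reinterpretation shown in Figure~\ref{fig:control_loop}: the observer's correction $\omega(k,e(k))$ acts as a feedback perturbation on $(x,\theta)$ into a nominal control loop, while the plant output differs from the observer's prediction only by $e(k)$. When $e(k)\to 0$, both sources vanish, so Assumption~\ref{ass:contr} forces the nominal output to track $r(k)$, and then the identity $y(k)=e(k)+g(x(k|k-1),d(k|k-1))$ transfers this tracking to the plant.

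The concrete steps are as follows. First, I would invoke Assumption~\ref{ass:observer}: since $\omega(k,\cdot)$ is continuous at the origin with $\omega(k,0)=0$, the hypothesis $e(k)\to 0$ yields $\omega(k,e(k))\to 0$, so via~\eqref{eq:Delta-x-th} both $\Delta x(k)\to 0$ and $\Delta\theta(k)\to 0$. Second, I would identify the observer's internal trajectory $(x(k|k-1),\theta(k|k-1))$ with a state trajectory of the nominal model~\eqref{eq:model} driven by the applied input: the time-update equations in~\eqref{eq:observer} are precisely those dynamics, while the measurement-update~\eqref{eq:xkk} feeds the controller the perturbed state $(x(k|k-1)+\Delta x(k),\theta(k|k-1)+\Delta\theta(k))$. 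Assumption~\ref{ass:contr} then delivers $g(x(k|k-1),d(k|k-1))\to r(k)$. Third, I would combine this with~\eqref{eq:output_error} to conclude~\eqref{eq:y(k)_asympt}. For the constraints, $u(k)\in\UU$ for every $k$ is immediate from the declared codomain of $\kappa$ in~\eqref{eq:nlcon}, and eventual $y(k)\in\YY$ follows because Assumption~\ref{ass:contr} places the nominal output in $\open{\YY}$ and $y(k)$ differs from it only by the vanishing $e(k)$, so $y(k)\in\YY$ for all $k\geq k_f$ with $k_f$ chosen large enough.

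The principal difficulty is the second step, since Assumption~\ref{ass:contr} is stated for a genuinely constant parameter $\theta(k)\equiv\bar\theta$, whereas the observer's internal $\theta(k|k-1)$ is incremented by $\Delta\theta(k)$ at each step and need not converge merely from $\Delta\theta(k)\to 0$. The cleanest way around this is to treat the augmented pair $(x,\theta)$ as the nominal state with the trivial $\theta$-dynamics $\theta(k+1)=\theta(k)$: the $\omega_\theta$-correction then enters as an additive state perturbation on top of a constant-$\theta$ nominal loop while simultaneously entering the controller's feedback, and the robustness of $\kappa$ articulated by Assumption~\ref{ass:contr} is what absorbs both contributions. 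Spelling out this reformulation precisely, and checking that the use of $(x(k|k-1),\theta(k|k-1))$ rather than $(x(k|k),\theta(k|k))$ is innocuous because the two differ by the same vanishing $\omega(k,e(k))$, is the technical linchpin of the proof.
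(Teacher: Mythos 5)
Your proposal follows essentially the same route as the paper's proof: use Assumption~\ref{ass:observer} and $e(k)\to 0$ to get vanishing corrections $\Delta x(k),\Delta\theta(k)$ via~\eqref{eq:Delta-x-th}, view the observer's predicted trajectory $(x(k|k-1),\theta(k|k-1))$ as a perturbed closed loop of the nominal model~\eqref{eq:model} with~\eqref{eq:nlcon} (the paper additionally invokes Assumption~\ref{ass:continuity} to conclude the induced $\Delta d(k)\to 0$, and Assumption~\ref{ass:perfect-modeling} alongside Assumption~\ref{ass:contr}), then conclude $g(x(k|k-1),d(k|k-1))-r(k)\to 0$ and add $e(k)$ to obtain~\eqref{eq:y(k)_asympt}, with $u(k)\in\UU$ from the codomain of $\kappa$ and eventual $y(k)\in\YY$ from interiority. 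The ``principal difficulty'' you flag (Assumption~\ref{ass:contr} being stated for constant $\theta$ while the observer's $\theta(k|k-1)$ accumulates corrections) is handled by the paper in exactly the way you suggest, i.e., by reading Assumption~\ref{ass:contr} as robustness of the augmented loop to vanishing perturbations entering both the state evolution and the feedback, so no substantive discrepancy remains.
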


\begin{proof}
Consider the perturbed dynamical system
\begin{align*}
    x(k{+}1|k) &= f(x(k|k{-}1) {+} \Delta x(k),u(k),d(k|k{-}1) {+} \Delta d(k)), \\
    \theta(k{+}1|k) &= \theta(k|k{-}1) {+} \Delta \theta(k), \\
    u(k) &= \kappa (x(k|k{-}1) {+} \Delta x(k),r(k), \theta(k|k{-}1) {+} \Delta\theta(k)),
\end{align*}
where $\Delta x(k)$, $\Delta \theta(k)$ are given by~\eqref{eq:Delta-x-th} and
    $\Delta d(k) = h(x(k|k-1) + \Delta x(k),u(k), \theta(k|k-1)+\Delta\theta(k))
                - d(k|k-1)$.
Since, by Assumption~\ref{ass:observer}, the observer feedback $\omega$ is continuous with respect to the output estimation error in a neighborhood of the origin and $\lim_{k\rightarrow\infty} e(k)=0$, then from~\eqref{eq:xkk} we have that the perturbations $\Delta x(k) \rightarrow 0$ and $\Delta\theta(k)\rightarrow 0$ as $k \rightarrow \infty$.
By Assumption~\ref{ass:continuity}, function $h$ is continuous with respect to $x$ and $d$, so that also $\Delta d(k)\rightarrow 0$.
Hence, by Assumptions~\ref{ass:perfect-modeling} and~\ref{ass:contr}, we have that $g(x(k|k-1), d(k|k-1))-r(k) \rightarrow 0$ and, finally, since $y_\p(k)- y(k|k-1)=e(k)\rightarrow 0$, also that the \emph{actual} tracking error
\begin{align*}
    y_\p(k){-}r(k) &= y_\p(k)-y(k|k-1)+y(k|k-1)-r(k) \\
                   &= e(k)+g(x(k|k-1),d(k|k-1))-r(k)\rightarrow 0.
\end{align*}
Since, by Assumption~\ref{ass:trackability}, $r(k)\in\open{\YY}$ and $y_\p(k)-r(k)\rightarrow 0$, then there exists $k_f \in \N$ such that $y_\p(k)\in\YY$, $\forall k\geq k_f$. \qedhere
\end{proof}

Note that Assumption~\ref{ass:stabilizability} has not been explicitly mentioned in the proof of Theorem~\ref{th:main}, but in light of the result of the theorem, the condition $x_\p(0)\in\XX_{\p}^0$ is a necessary requirement for~\eqref{eq:e(k)_asympt} and~\eqref{eq:y(k)_asympt} to hold.

We finally remark that the assumption in~\eqref{eq:e(k)_asympt} might be perceived as rather strong.
However, note that Assumption~\ref{ass:perfect-modeling} guarantees that model~\eqref{eq:model} can perfectly reproduce the input/output signals from the plant~\eqref{eq:plant} under perfect tracking conditions for a particular value of $\bar\theta$ and Assumption~\ref{ass:contr} guarantees that the controller can make the model track the reference for the same~$\bar\theta$.
Therefore, assumption~\eqref{eq:e(k)_asympt} amounts to having a well-chosen disturbance model and a well-designed state observer.
Note that this is a common assumption in the literature on offset-free MPC (see, e.g.,~\cite[Assumption~9]{pannocchia2015}).

\section{Offset-free EKF-based Nonlinear MPC} \label{sec:NMPC_EKF}

We now show how the result from the previous section may be applied to the frequently used combination of the Extended Kalman Filter (EKF) as observer and Nonlinear MPC (NMPC) as controller, where the EKF is used to learn the parameters of the disturbance model to reduce the tracking error and potentially lead to offset-free tracking under a suitable selection of the disturbance model.

\subsection{Observer: Extended Kalman Filter} \label{sec:EKF}

When considering the EKF, we take the following particularization of the prediction model~\eqref{eq:model}:
\begin{equation} \label{eq:model:dxdy}
	\begin{aligned}
		x(k+1) & =  f(x(k),u(k),d_x(k)), \\
        d_x(k) &= h_x(x(k),u(k), \theta(k)), \\
        d_y(k) &= h_y(x(k), \theta(k)), \\
		y(k) &=  g(x(k),d_y(k)),
	\end{aligned}
\end{equation}
where the disturbance $d$ is split into the process disturbance $d_x \in \R^\ndx$ and the output disturbance $d_y \in \R^\ndy$, and the disturbance function $h$ into $h_x \colon \R^\nx \times \R^\nu \times \R^\nth \to \R^\ndx$ and $h_y \colon \R^\nx \times \R^\nth \to \R^\ndy$.
Clearly, model~\eqref{eq:model} can be recovered from~\eqref{eq:model:dxdy} by taking $d = [d_x\T~ d_y\T]\T \in \R^\nd$ and $h$ similarly.
We note that $d$ is split so that the measurement-update of the EKF does not depend on the value of $u(k)$, as seen in the sequel.

Model~\eqref{eq:model:dxdy} can be interpreted as a combined model in which $f_n(x,u) \doteq f(x,u,0)$ and $g_n(x) \doteq g(x,0)$ capture a nominal model (either physics-based or black-box) estimated off-line from a set of input/output data $\{u(k),y(k)\}$, and the disturbance model is used for on-line adaptation to match the data measured from the real plant.

In particular,
one can estimate $(x(k), \theta(k))$ by using an EKF with measurement update
\begin{subequations}
\begin{equation} \label{eq:ekf-meas}
\begin{aligned} 
    d_y(k|k-1) & = h_y(x(k|k-1),\theta(k|k-1)), \\
    e(k) & =  y_\p(k) - g(x(k|k-1) , d_y(k|k-1)), \\
    B(k) &= C(k)P(k|k-1)C'(k) + Q_{y}(k), \\
    M(k) & = P(k|k-1)C'(k)B(k)^{-1}, \\
    \begin{bmatrix} x(k | k) \\ \theta(k | k) \end{bmatrix} & =
    \begin{bmatrix} x(k | k-1) \\ \theta( k | k-1) \end{bmatrix} + M(k)e(k), \\
    d_y(k|k)&=h_y(x(k | k),\theta(k | k)), \\
    P(k|k) & = \big(I - M(k)C(k)\big)P(k|k-1),
\end{aligned}
\end{equation}
and time update
\begin{equation}
\begin{aligned} \label{eq:ekf-time}
    d_x(k|k)&=h_x(x(k | k),u(k),\theta(k | k)), \\
    x(k+1 | k) & =  f( x(k|k), u(k), d_x(k|k) ), \\
    \theta(k+1 | k) & = \theta(k|k), \\
    P(k+1 | k) & = A(k)P(k|k)A' (k) + Q(k), 
\end{aligned}
\end{equation}
where
\begin{equation}
\begin{aligned} \label{eq:ekf_matrices}
    \moveEq{-10}C(k) & {=} \begin{bmatrix} \left(\frac{\partial g}{\partial x} {+}
        \frac{\partial g}{\partial d_y}\frac{\partial h_y}{\partial x}\right)
        & \frac{\partial g}{\partial d_y}\frac{\partial h_y}{\partial \theta}
    \end{bmatrix}\Big|_{x(k | k-1),\theta(k | k-1)},  \\
            \moveEq{-10}A(k) & {=} \begin{bmatrix} \left(\frac{\partial f}{\partial x} {+}
        \frac{\partial f}{\partial d_x}\frac{\partial h_x}{\partial x}\right)
        & \frac{\partial f}{\partial d_x} \frac{\partial h_x}{\partial \theta} \\ 0 & I
    \end{bmatrix}\Bigg|_{x(k | k),\theta(k | k),u(k)},  \\
                    \moveEq{-10}Q(k) & {=} \begin{bmatrix} Q_x(k) & 0 \\ 0 & Q_{\theta}(k) \end{bmatrix},
\end{aligned}%
\end{equation}
\label{eq:ekf}%
\end{subequations}
and $Q_x(k)$, $Q_\theta(k)$, and $Q_y(k)$ are positive semidefinite matrices representing, respectively, the covariance matrices of the process noise, parameter noise and output noise.\footnote{We note that the EKF~\eqref{eq:ekf} implicitly assumes that model~\eqref{eq:model:dxdy} is at least differentiable in its arguments.}

\subsection{Controller: Nonlinear MPC} \label{sec:controller}
Consider the NMPC formulation
\begin{equation}
	\begin{aligned}
        \min_{x,u}  & \sum_{j=0}^{N-1} \ell(x_j,u_j,x_\r(k{+}j),u_\r(k{+}j))\\[-1em]
        &\hspace*{1cm}+ V_\mathrm{f}(x_N,x_\r(k{+}N),\theta(k|k))\\
		\mathrm{s.t.} \ & x_0 = x(k|k), \\
                        &x_{j+1} = f(x_j,u_j,d_j), \; j \in \N_0^{N-1}, \\
                        &d_j = h(x_j,u_j, \theta(k|k)), \; j \in \N_0^{N-1}, \\
                        &y_j = g(x_j,d_j), \; j \in \N_0^{N-1}, \\
                        &y_j \in \YY,\; u_j \in \UU, \; j \in \N_0^{N-1}, \\
                        &x_N \in \mathcal{X}_\mathrm{f}(x_\r(k+N),\theta(k|k)),
	\end{aligned}
	\label{eq:nmpc}%
\end{equation}
where $\ell$ is the stage cost, $V_\mathrm{f}$ the terminal cost, and $\mathcal{X}_\mathrm{f}$ the terminal set. The corresponding control law is $u(k)=u_0^*$, where $u_0^*$ is the first control move obtained from the optimal solution $(x^*,u^*,d^*,y^*)$ of~\eqref{eq:nmpc}.
Design conditions for $\ell$, $V_\mathrm{f}$ and $\mathcal{X}_\mathrm{f}$ under which Assumption~\ref{ass:contr} is satisfied are well known for the time-invariant case, i.e., when $\theta(k|k)$ is constant at all time instants $k$, see~\cite{RMD17,limon2009input}.
However, due to the time-varying nature of $\theta(k|k)$ and $r(k)$, additional care should be taken when designing $V_\mathrm{f}$ and $\mathcal{X}_\mathrm{f}$ so as to guarantee stability and recursive feasibility, as well as to provide robustness under vanishing perturbations.
Due to space considerations, we leave these technical details out of the scope of this paper, instead relying on taking the classical stage cost 
\begin{align*}
    \ell(x, u, x_r, u_r) &= \| x - x_r \|^2_{W_x} + \| u - u_r \|^2_{W_u},
\end{align*}
and a terminal equality constraint, i.e., $\mathcal{X}_\mathrm{f} = \{x_r(k+N)\}$, with $V_\mathrm{f} = 0$.
This NMPC setting, in general, does not guarantee \emph{a priori} closed-loop stability and recursive feasibility properties when the prediction model changes.
However, if the mismatch between the process~\eqref{eq:plant} and the nominal model~\eqref{eq:model} is small and $\theta(k|k)$ changes slowly, such properties may occur in practice.

At each sample time $k$,~\eqref{eq:nmpc} requires reference signals $x_\r(k+j)$, $j \in \N_0^{N}$, and $u_\r(k+j)$, $j \in \N_0^{N-1}$, that satisfy the prediction model for the current estimate of the disturbance parameters $\theta(k|k)$.
In order to compute these signals for the output reference trajectory $\{r(k+j)\}_{j=0}^\infty$, consider the following infinite-horizon reference optimization problem
\begin{align} \label{eq:ref_signals}
    \min\limits_{\vv{\hat{x}}_\r,\vv{\hat{u}}_\r} 
    &\;\sum_{j=0}^\infty \ell_\r(\hat{x}_\r(k+j), \hat{u}_\r(k+j)) \\ 
    \st &\; r(k+j)=g(\hat{x}_\r(k+j),\hat{d}_\r(k+j)),\nonumber \\
        &\; \hat{x}_\r(k+j+1)=f(\hat{x}_\r(k+j),\hat{u}_\r(k+j),\hat{d}_\r(k+j)), \nonumber\\
        &\; \hat{d}_\r(k+j)=h(\hat{x}_\r(k+j),\hat{u}_\r(k+j),\theta(k|k)), \nonumber \\
        &\; \hat{u}_\r(k+j) \in \UU, \nonumber
\end{align}
where $\vv{\hat{x}}_\r=\{\hat{x}_\r(k+j)\}_{j=0}^\infty$ and
$\vv{\hat{u}}_\r=\{\hat{u}_\r(k+j)\}_{j=0}^\infty$ are, respectively, the state and input reference sequences associated with the reference trajectory $\{r(k+j)\}_{j=0}^\infty$, and the cost function $\ell_\r\colon\R^\nx\times\R^\nu\to\R$ is any convex function that can be used to make the selection unique in case of multiple solutions.
A typical choice is to take
\begin{equation} \label{eq:ref_signals:cost}
    \ell_r(\hat{x}(k+j), \hat{u}(k+j)) = \| \hat{u}(k+j) - u_d(k+j) \|^2,
\end{equation}
where $\{u_d(k)\}_{k=1}^\infty$ is a sequence describing a \emph{desired} input trajectory of the system, which is often available since inputs are typically related to aspects such as energy consumption.
An ideal choice is to take $u_d(k) = u_\r(k)$, where $u_\r(k)$ is the reference signal given by Assumption~\ref{ass:trackability}, if available.

At each sample time $k$, the reference signals $x_\r(k+j)$ and $u_\r(k+j)$ of \eqref{eq:nmpc} are taken from the optimal solution of \eqref{eq:ref_signals}.
In a practical setting, to avoid having to solve an infinite-horizon optimization problem, we can modify~\eqref{eq:ref_signals} to only consider $M \in \N$ future samples of the reference signal, i.e., to consider problem~\eqref{eq:ref_signals} for $j \in \N_0^M$, instead of $j \in \N_0^\infty$, where $M \geq N$, since \eqref{eq:nmpc} requires $N$ future samples of the reference.
A special case of~\eqref{eq:ref_signals} is a constant set-point $r(k+j)\equiv r(k)$, $\forall j \in \N$, for which a feasible solution $\vv{\hat{x}}_\r,\vv{\hat{u}}_\r$ is given by solving the steady-state equations
\begin{equation} \label{eq:ref_signals-no-preview}
	\begin{aligned}
        r(k)&=g(\hat{x}_\r(k),\hat{d}_\r(k)), \\
        \hat{x}_\r(k)&=f(\hat{x}_\r(k),\hat{u}_\r(k),\hat{d}_\r(k)), \\ 
        \hat{d}_\r(k)&=h(\hat{x}_\r(k),\hat{u}_\r(k),\theta(k|k)).
	\end{aligned}
\end{equation}
Another interesting instance of~\eqref{eq:ref_signals} is the special case
of periodic reference signals with period $T$, i.e., reference signals satisfying $r(k+j+T)=r(k+j)$, $\forall j \in \N$.
In this case, problem~\eqref{eq:ref_signals} would consider $j \in \N_0^{T-1}$, instead of $j \in \N_0^\infty$, and include the additional constraint $\hat{x}_\r(k) = \hat{x}_\r(k+T)$.

\begin{remark} \label{rem:NMPCT}
    In a practical setting, it may not be possible to choose a $r(k)$ that is known to satisfy Assumptions~\ref{ass:trackability} and~\ref{ass:stabilizability} due to a lack of knowledge of the real plant dynamics~\eqref{eq:plant}.
    In this case, the use of nonlinear MPC formulations with artificial reference \cite{limon_NMPCT_2018, kohler_AUT_2020}, could be used to provide convergence to the closest admissible reference trajectory to the given $r(k)$.
\end{remark}

\vspace*{1em}
\subsection{Numerical examples} \label{sec:results}

\begin{figure*}[!h]
        \centering
        \subfloat[Polynomial disturbance model]{
            \begin{minipage}{0.33\linewidth}
                \includegraphics[width=0.98\linewidth, height = 0.2\textheight, keepaspectratio=true]{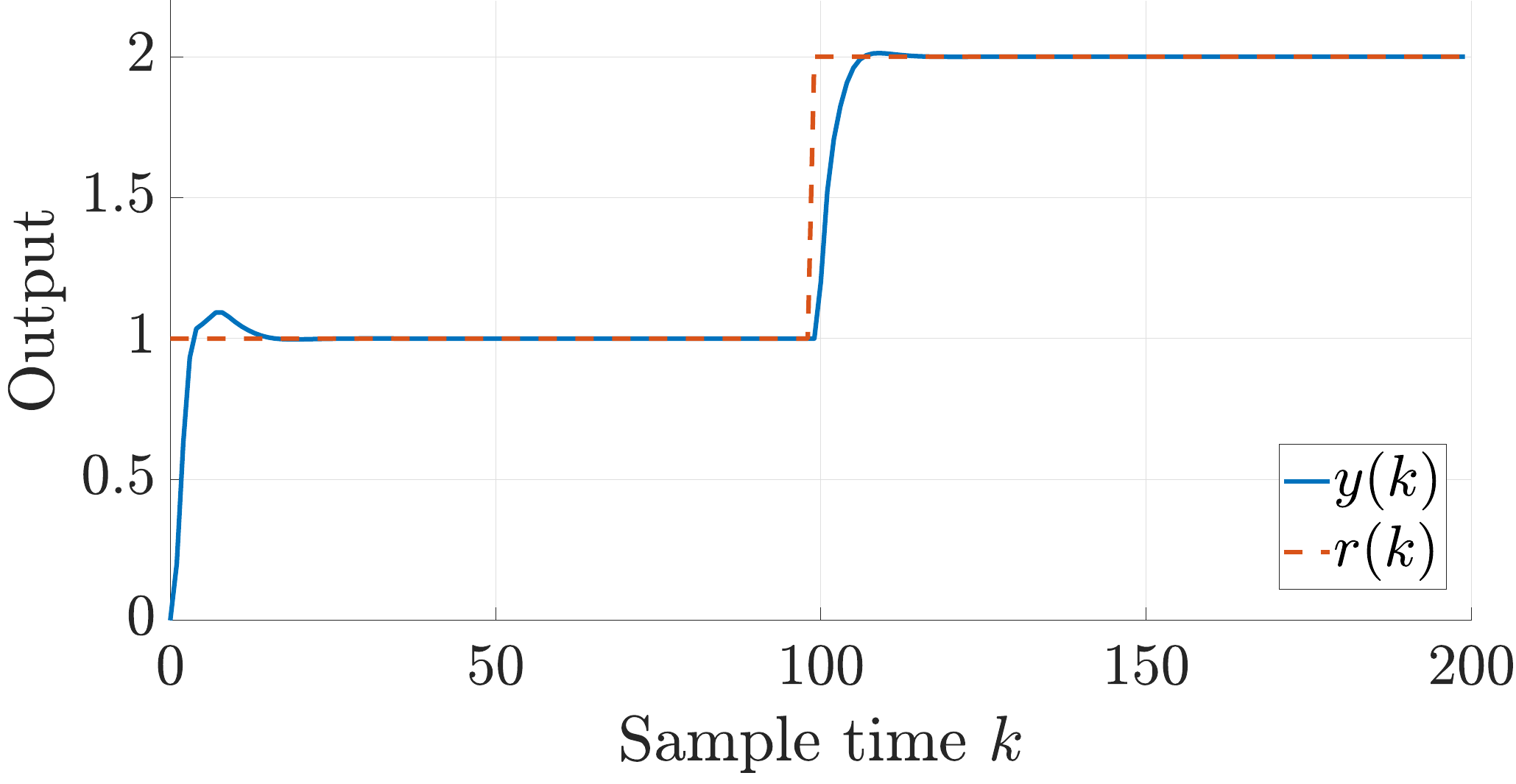}

                \includegraphics[width=0.98\linewidth, height = 0.2\textheight, keepaspectratio=true]{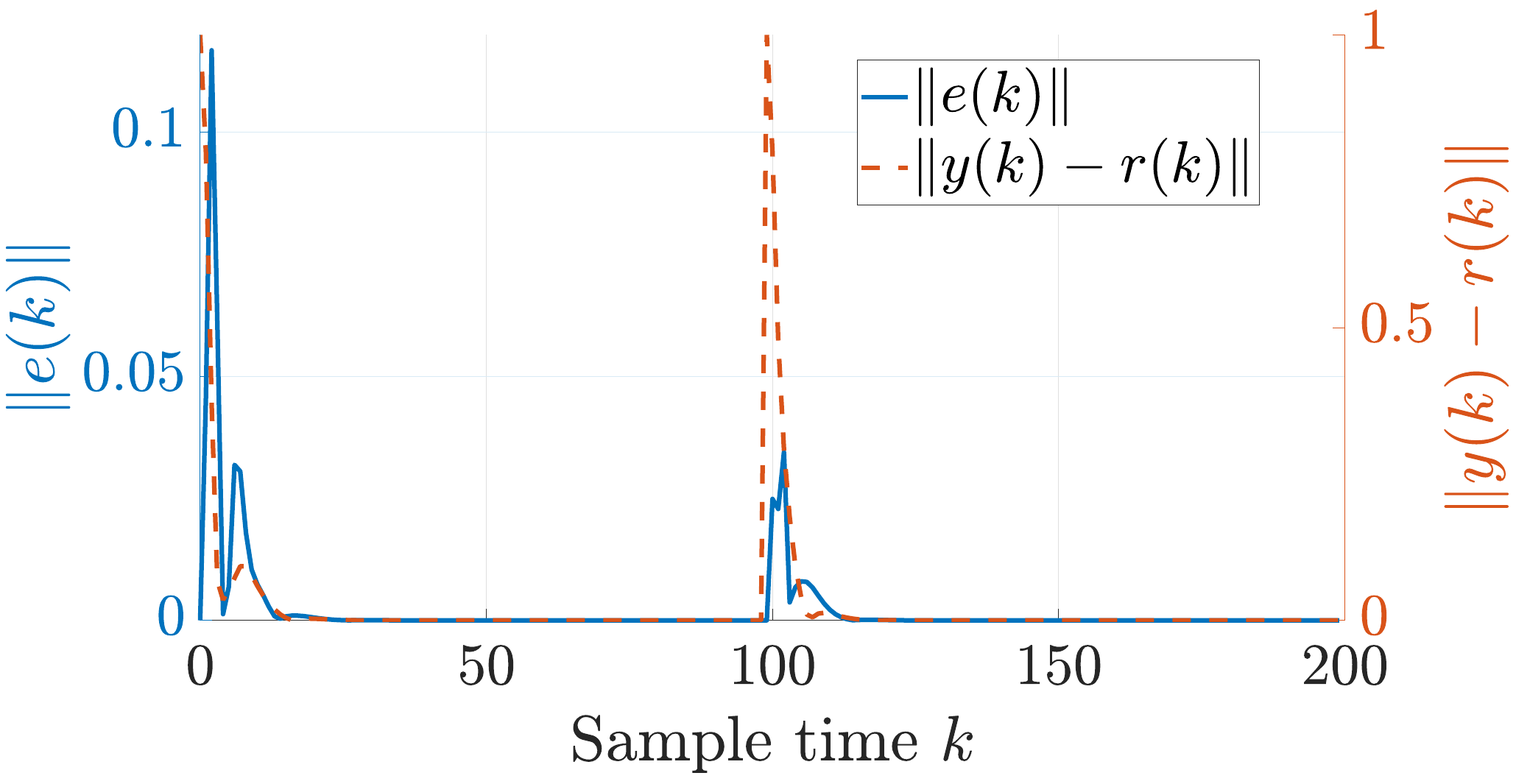}
            \end{minipage}} 
        \subfloat[Constant disturbance model]{
            \begin{minipage}{0.33\linewidth}
                \includegraphics[width=0.98\linewidth, height = 0.2\textheight, keepaspectratio=true]{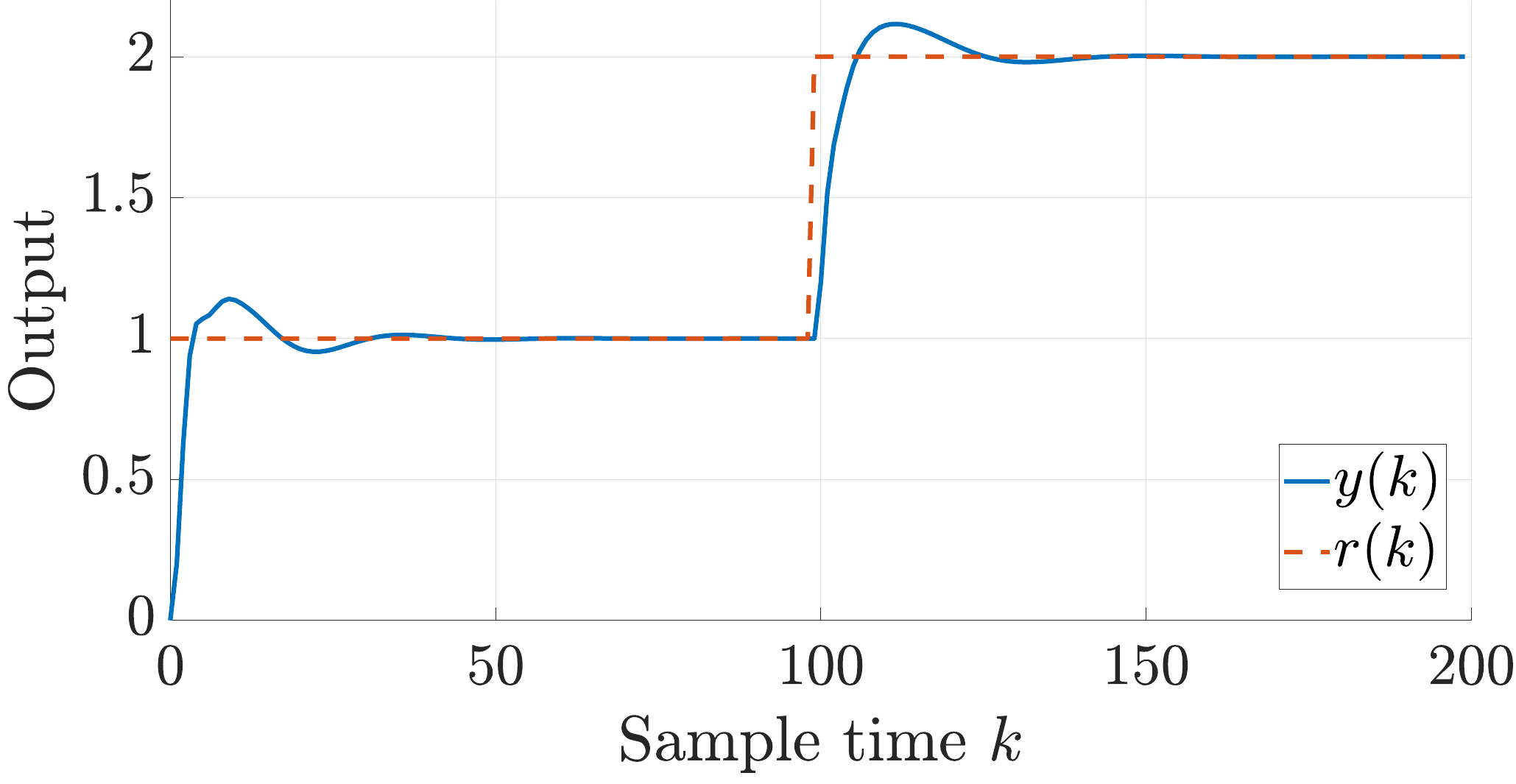}

                \includegraphics[width=0.98\linewidth, height = 0.2\textheight, keepaspectratio=true]{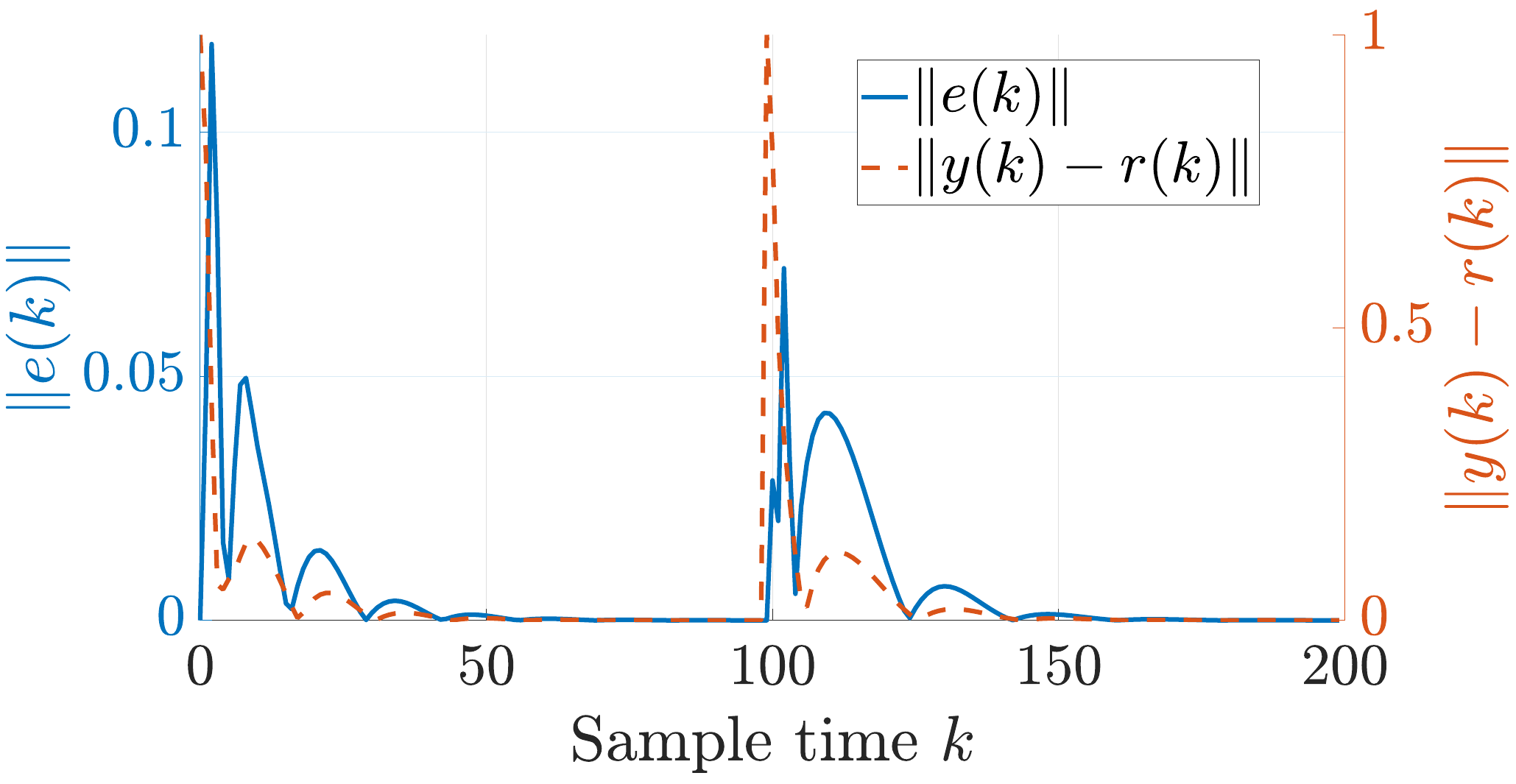}
            \end{minipage}} 
        \subfloat[FNN disturbance model]{
            \begin{minipage}{0.33\linewidth}
                \includegraphics[width=0.98\linewidth, height = 0.2\textheight, keepaspectratio=true]{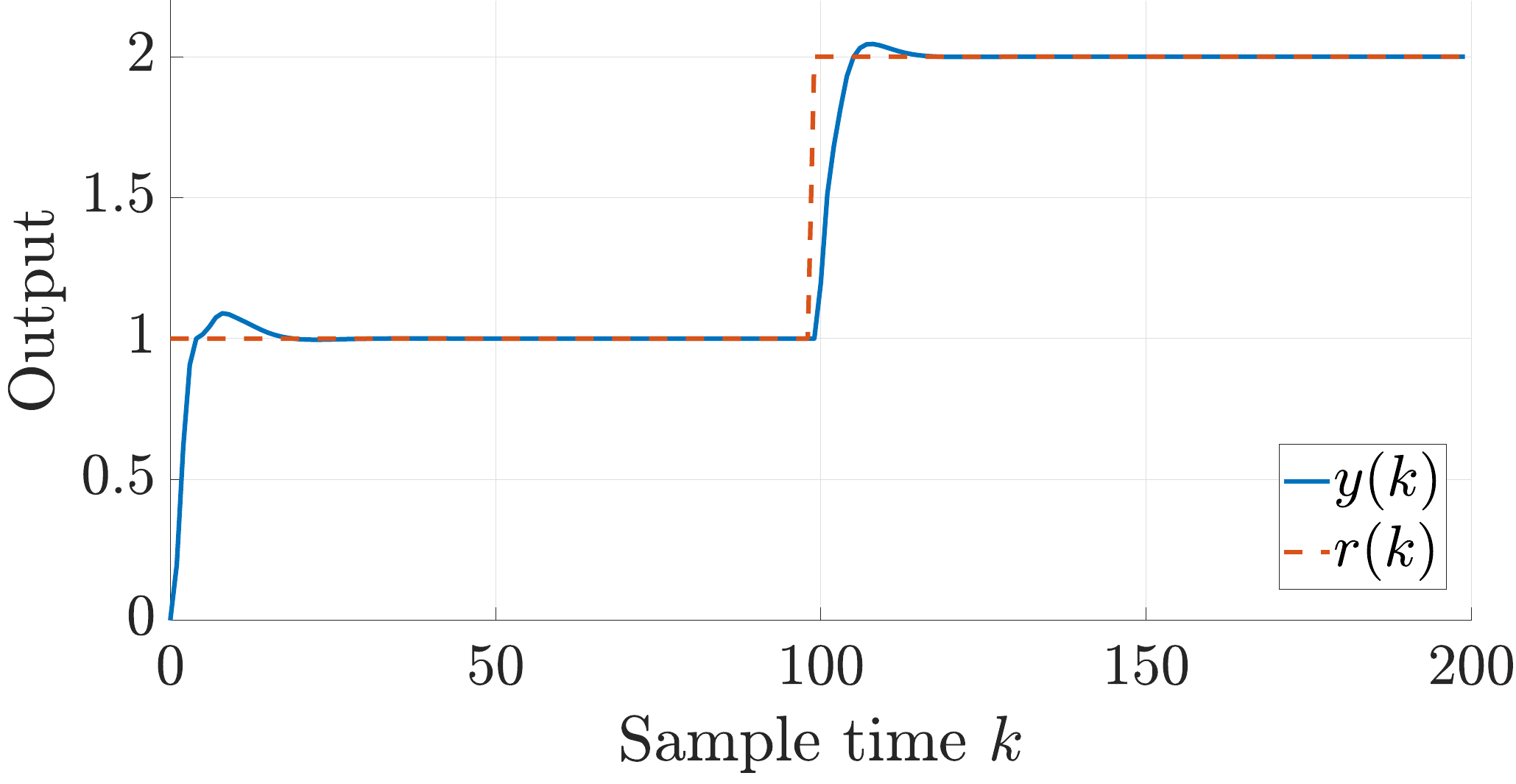}

                \includegraphics[width=0.98\linewidth, height = 0.2\textheight, keepaspectratio=true]{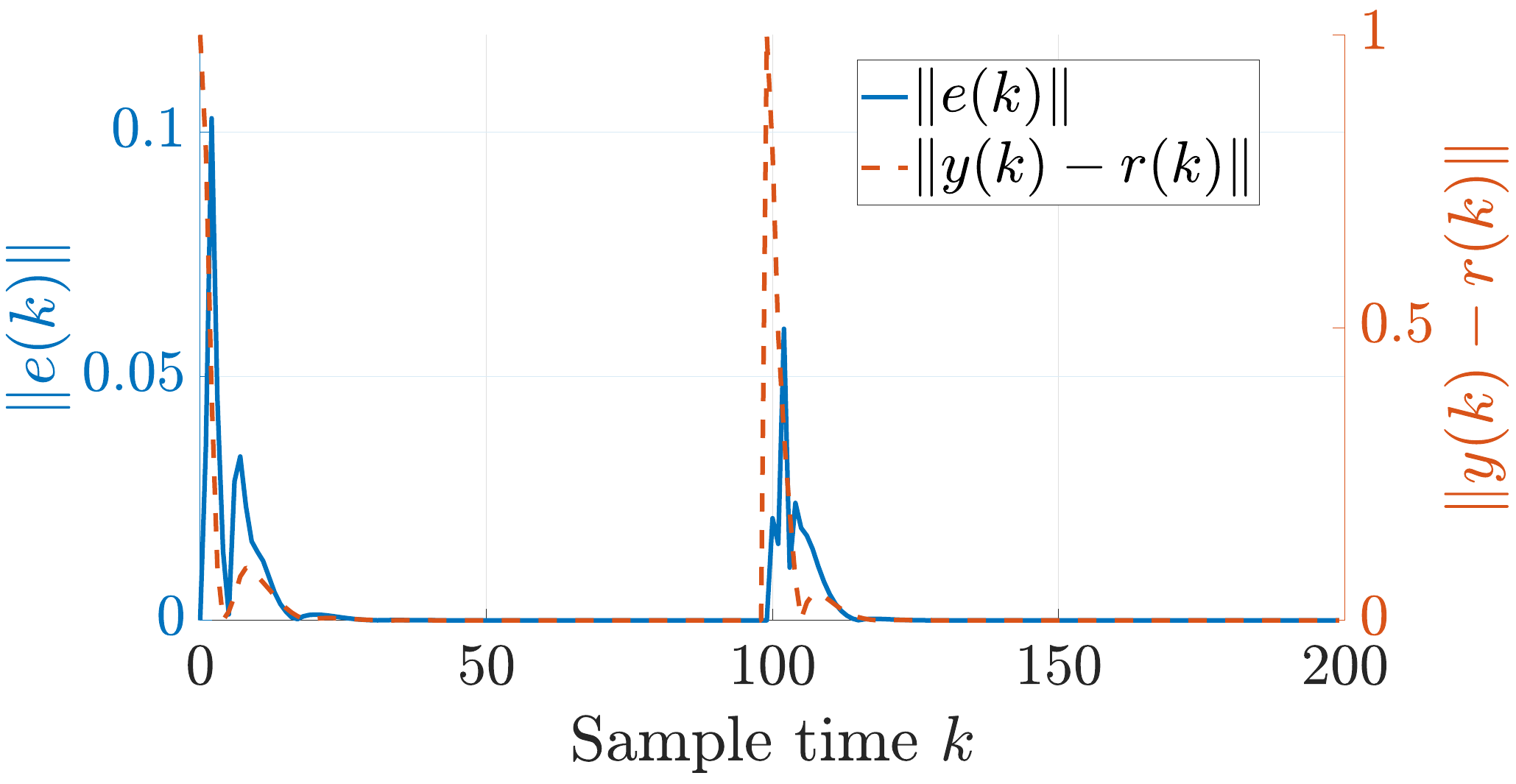}
            \end{minipage}} 
        \caption{Van der Pol plant. Closed-loop tracking of a piecewise-constant reference using different disturbance models.}
        \label{fig:VanderPol_ss}
\end{figure*}

\begin{figure*}[!h]
        \centering
        \subfloat[Polynomial disturbance model]{\label{fig:VanderPol_gen:poly}
            \begin{minipage}{0.33\linewidth}
                \includegraphics[width=0.98\linewidth, height = 0.2\textheight, keepaspectratio=true]{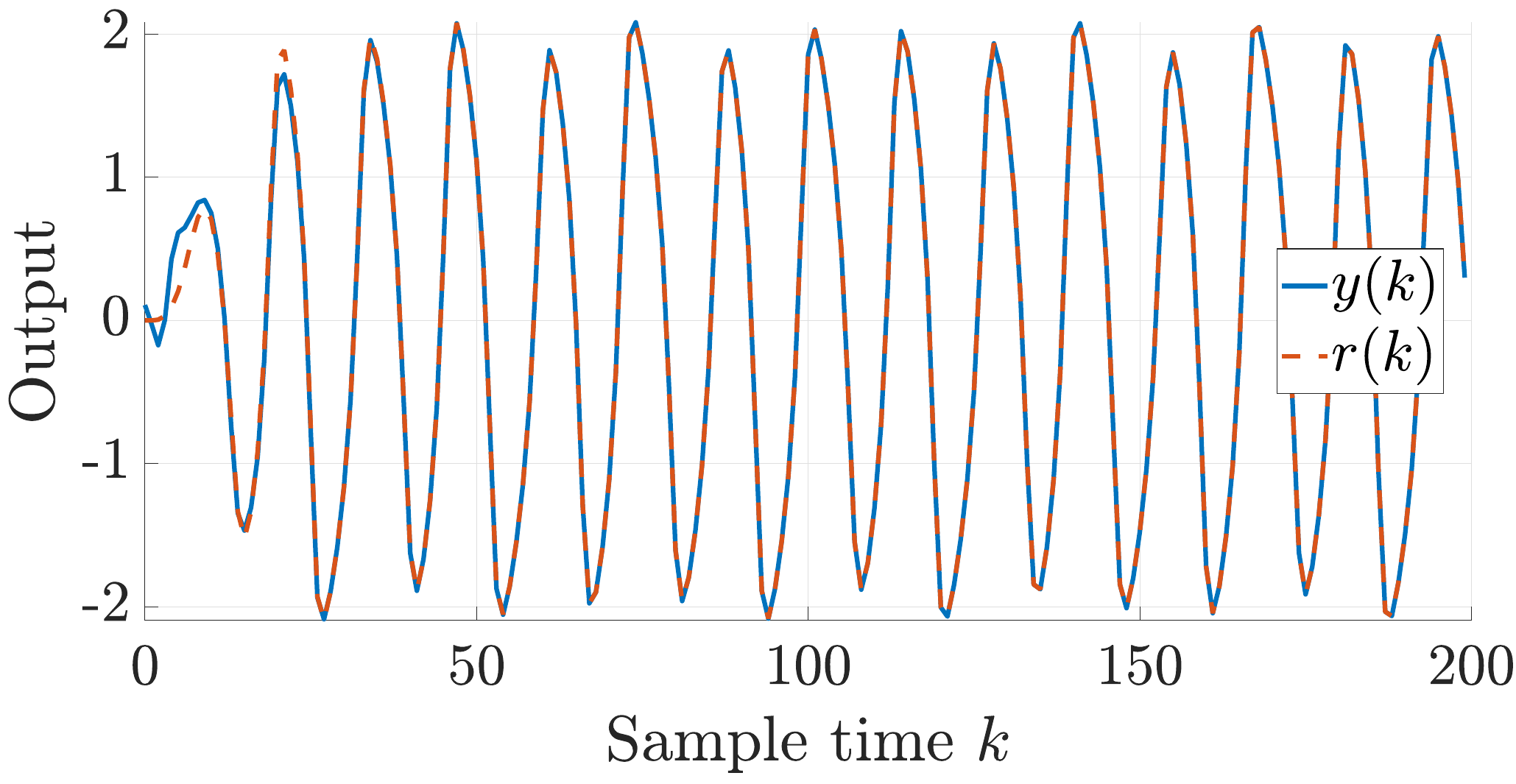}

                \includegraphics[width=0.98\linewidth, height = 0.2\textheight, keepaspectratio=true]{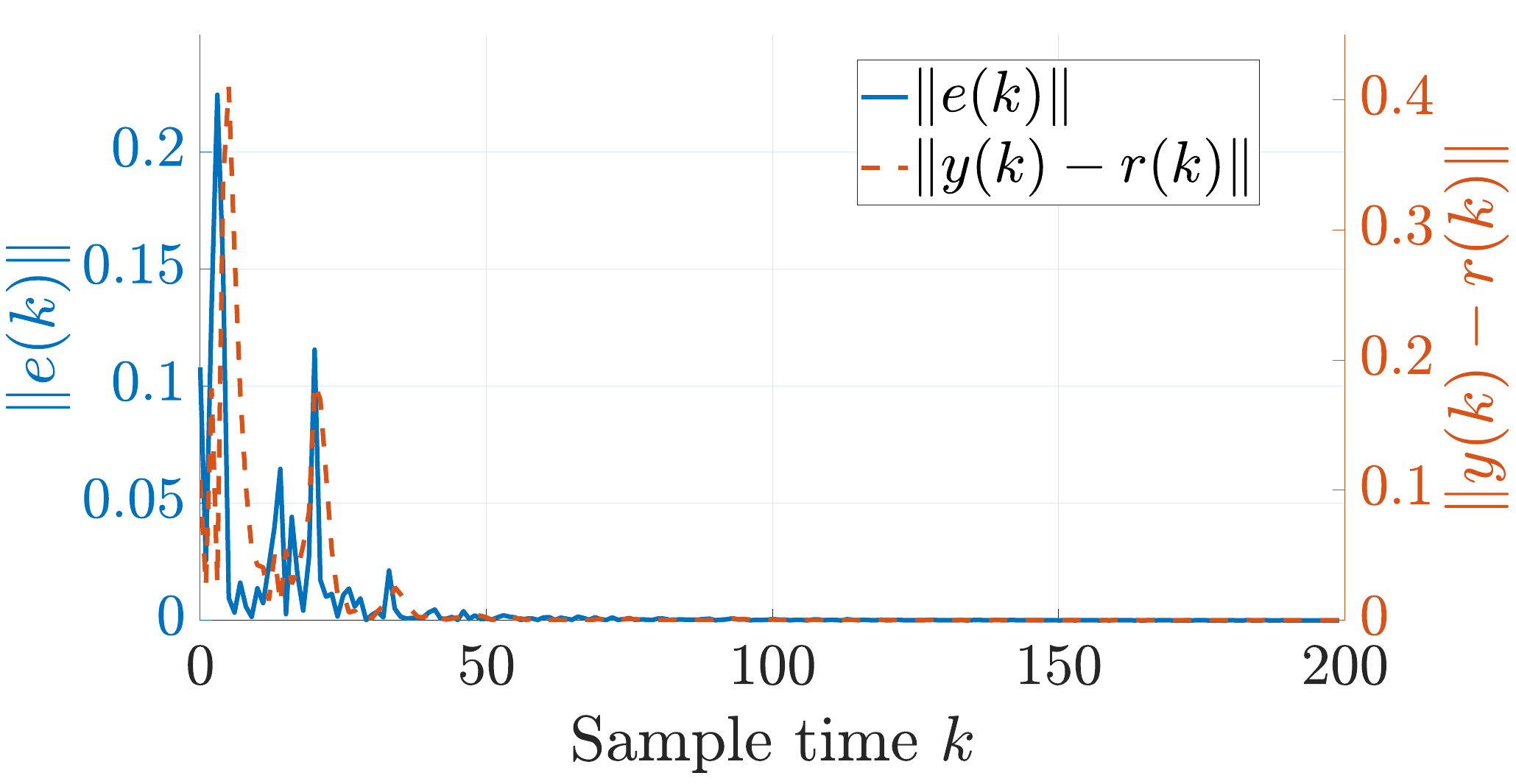}
            \end{minipage}} 
        \subfloat[Constant disturbance model]{\label{fig:VanderPol_gen:const}
            \begin{minipage}{0.33\linewidth}
                \includegraphics[width=0.98\linewidth, height = 0.2\textheight, keepaspectratio=true]{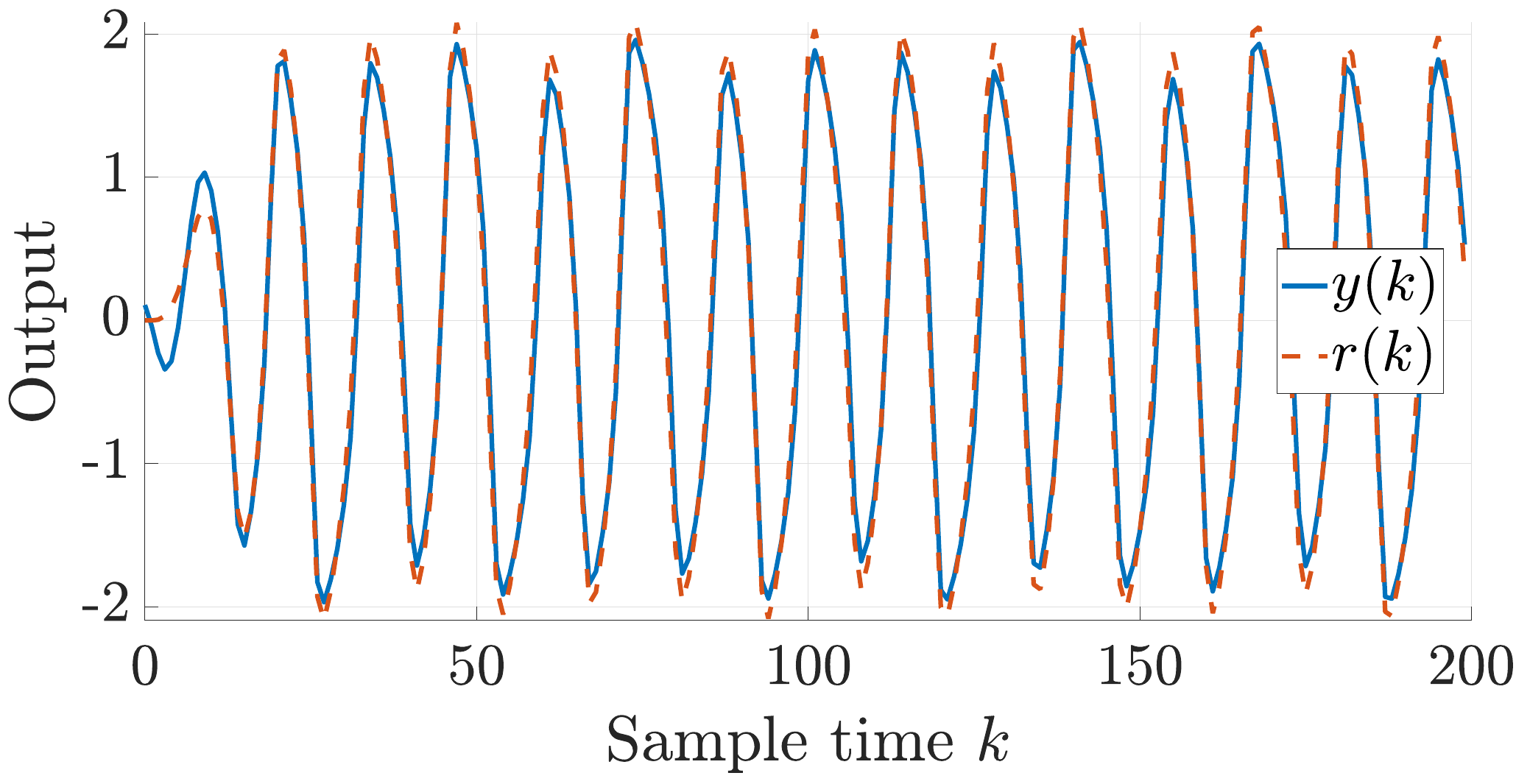}

                \includegraphics[width=0.98\linewidth, height = 0.2\textheight, keepaspectratio=true]{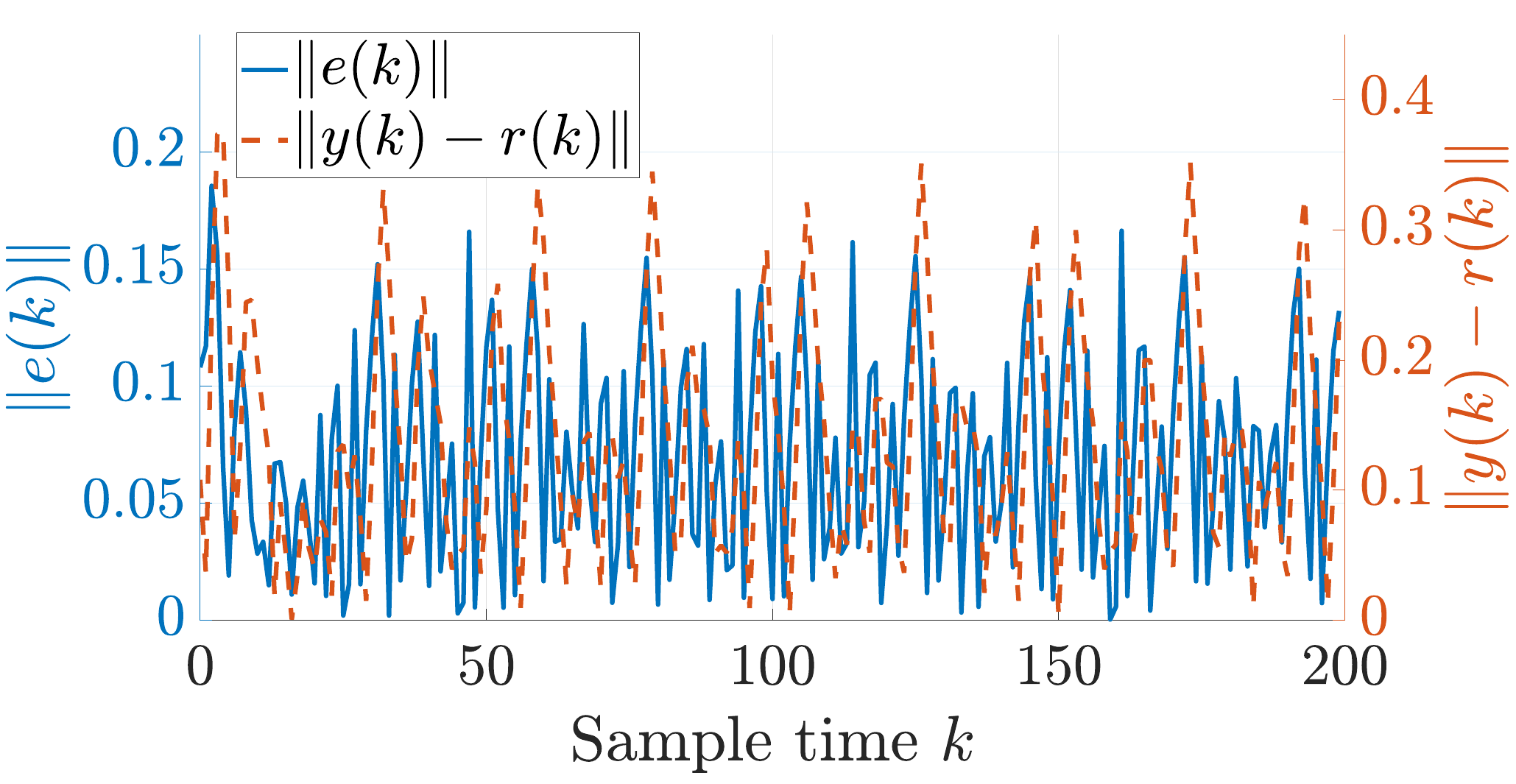}
            \end{minipage}} 
        \subfloat[FNN disturbance model]{ \label{fig:VanderPol_gen:FNN}
            \begin{minipage}{0.33\linewidth}
                \includegraphics[width=0.98\linewidth, height = 0.2\textheight, keepaspectratio=true]{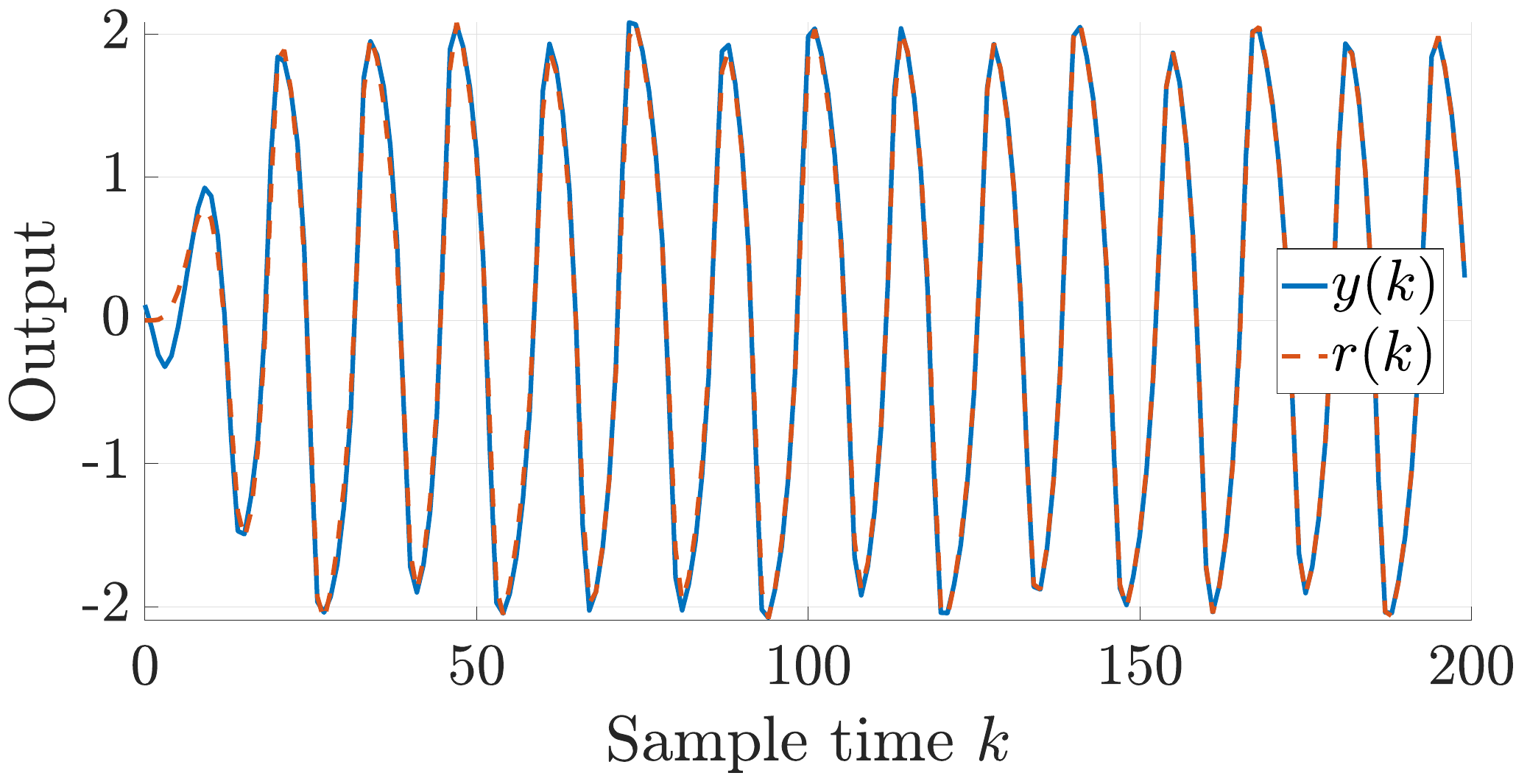}

                \includegraphics[width=0.98\linewidth, height = 0.2\textheight, keepaspectratio=true]{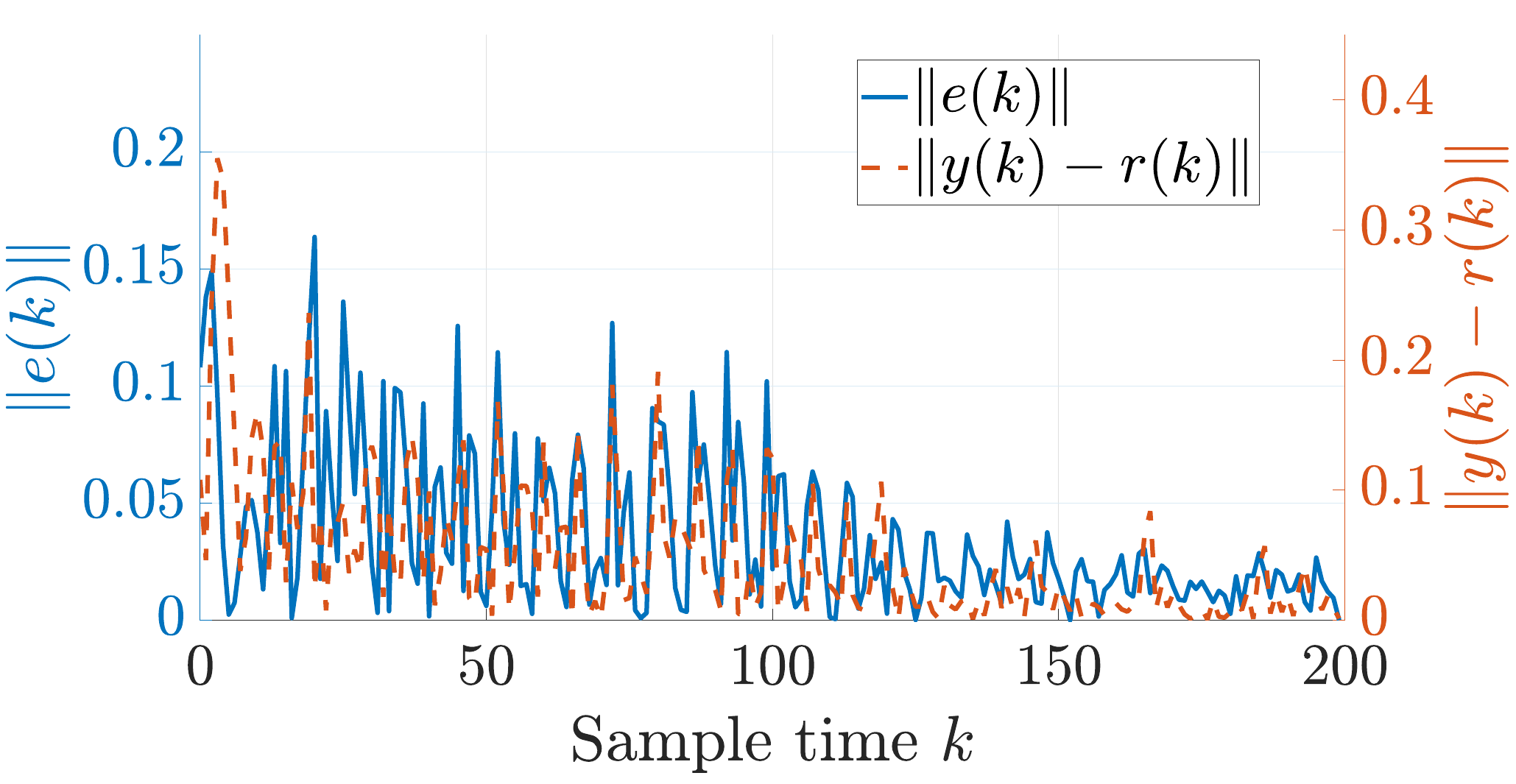}
            \end{minipage}} 
        \caption{Van der Pol plant. Closed-loop tracking of a generic reference trajectory using different disturbance models.}
        \label{fig:VanderPol_gen}
\end{figure*}

\begin{figure*}[!h]
        \centering
        \subfloat[Polynomial disturbance model]{\label{fig:CSTR_gen:poly}
            \begin{minipage}{0.33\linewidth}
                \includegraphics[width=0.98\linewidth, height = 0.2\textheight, keepaspectratio=true]{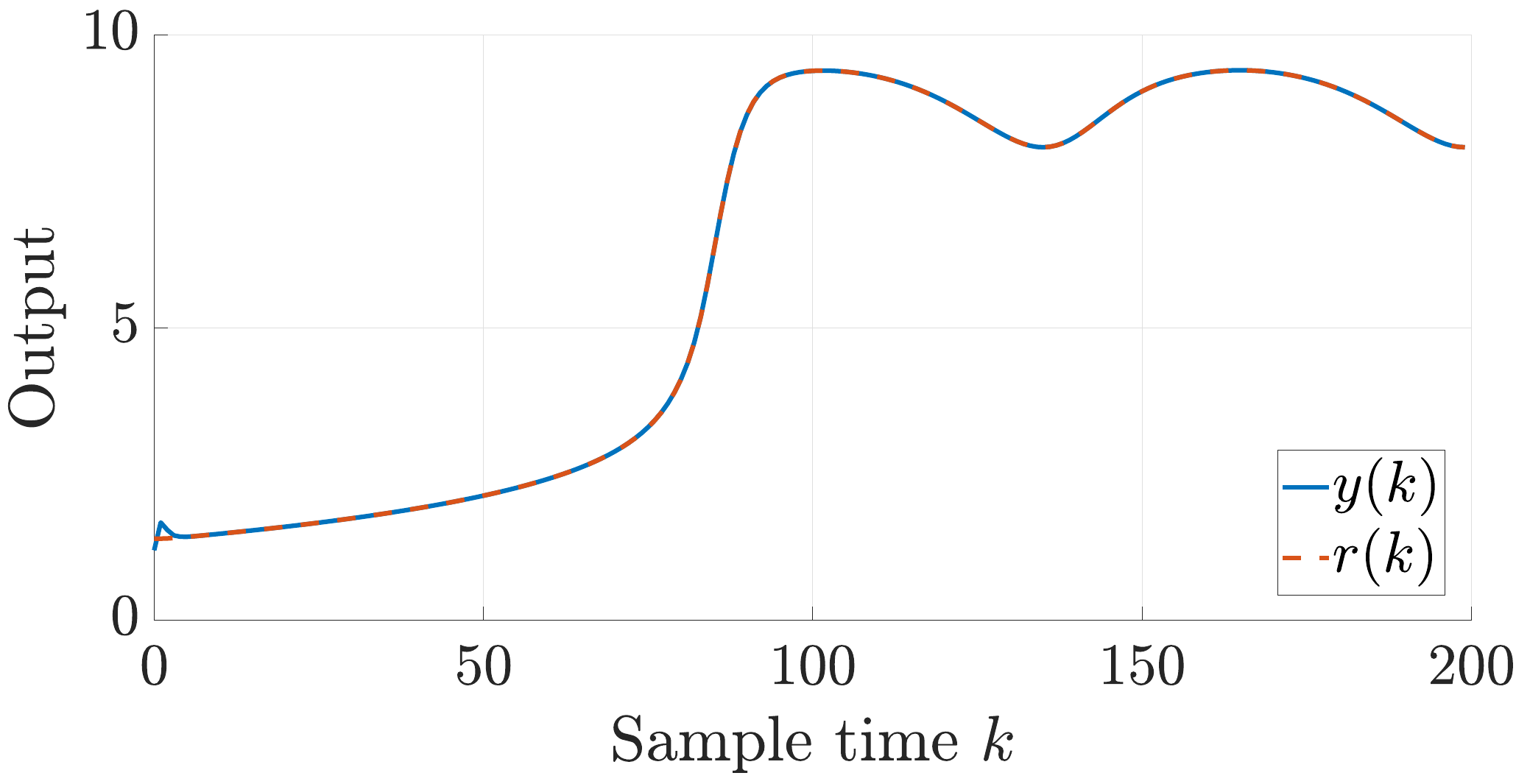}

                \includegraphics[width=0.98\linewidth, height = 0.2\textheight, keepaspectratio=true]{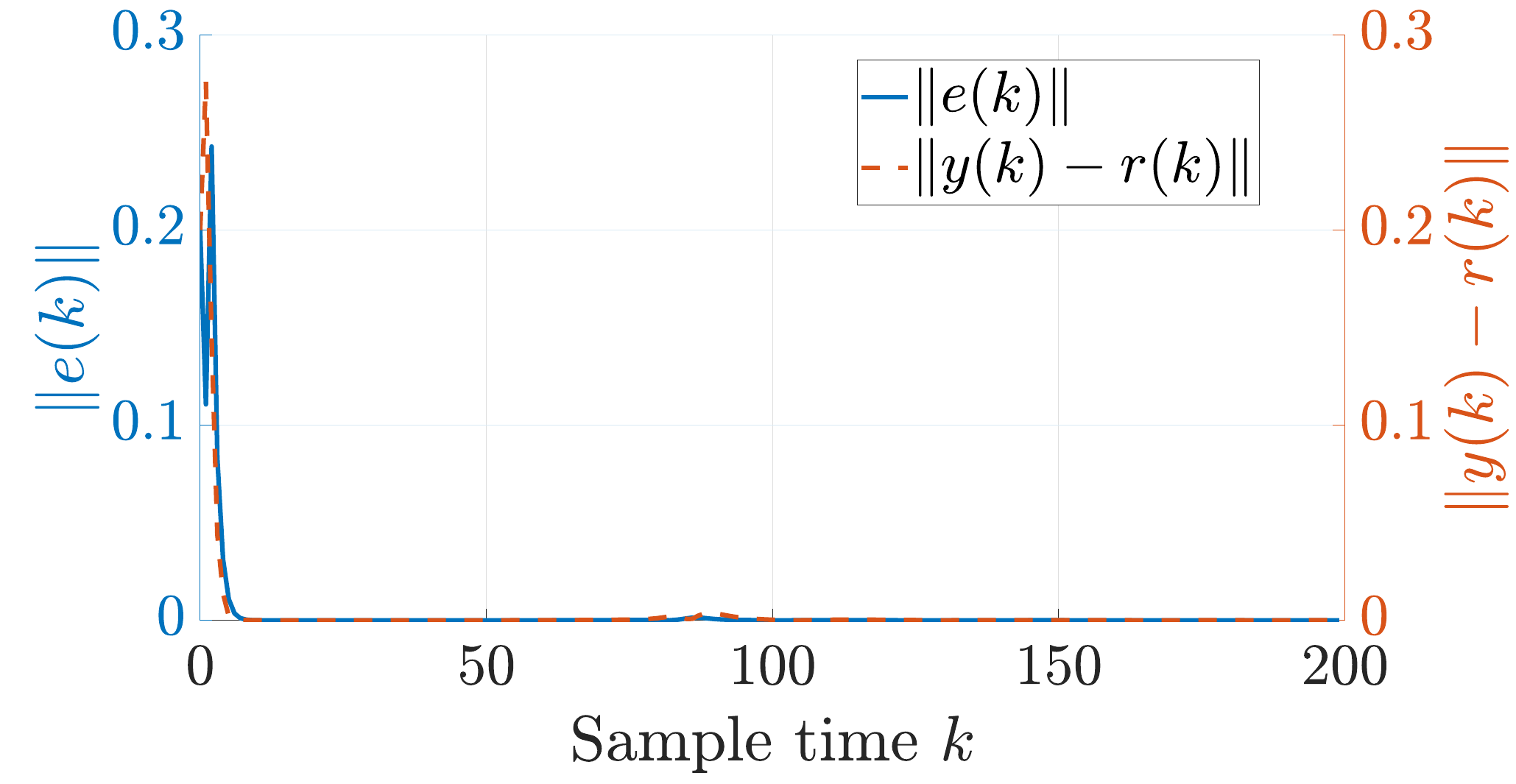}
            \end{minipage}} 
        \subfloat[Constant disturbance model]{\label{fig:CSTR_gen:const}
            \begin{minipage}{0.33\linewidth}
                \includegraphics[width=0.98\linewidth, height = 0.2\textheight, keepaspectratio=true]{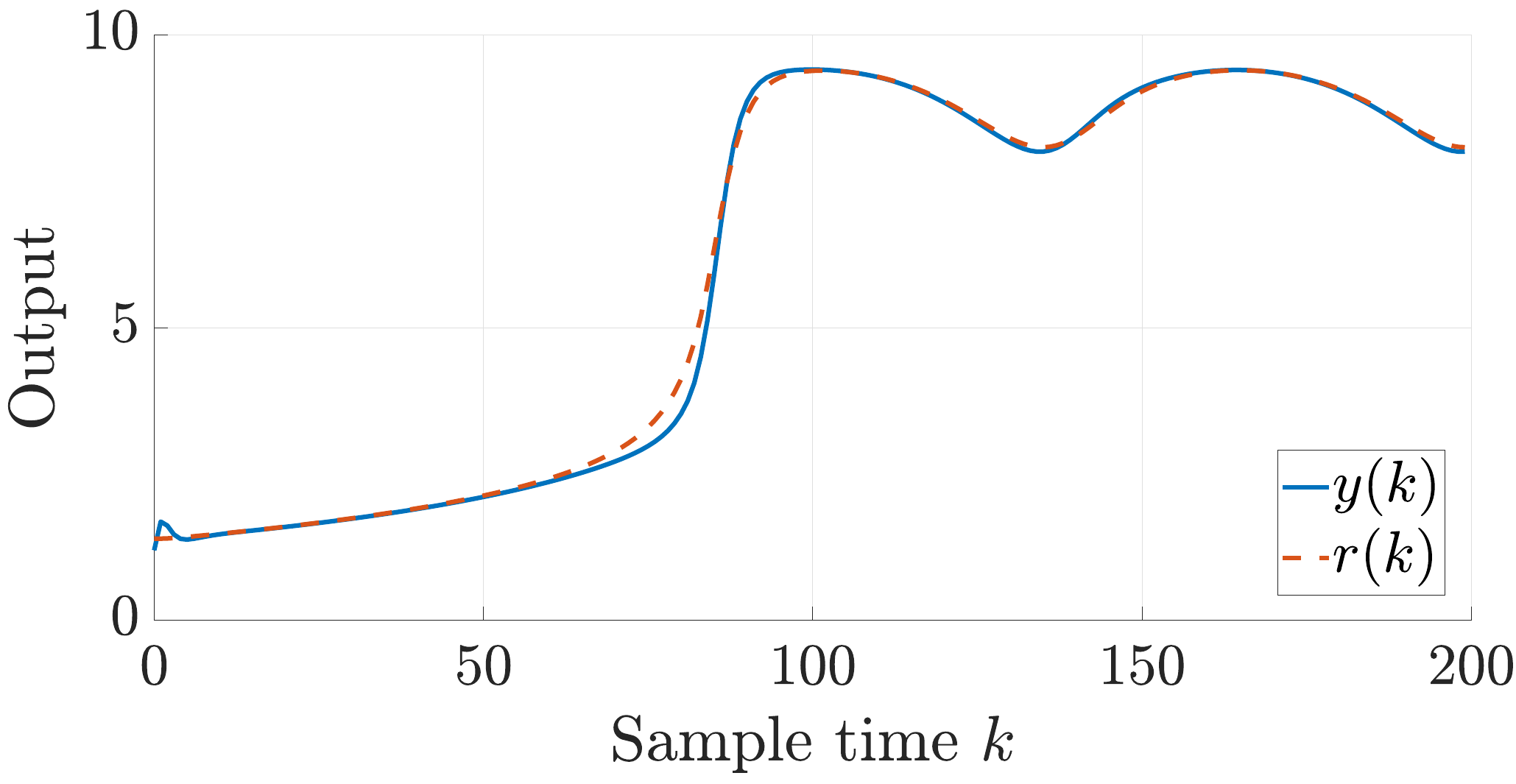}

                \includegraphics[width=0.98\linewidth, height = 0.2\textheight, keepaspectratio=true]{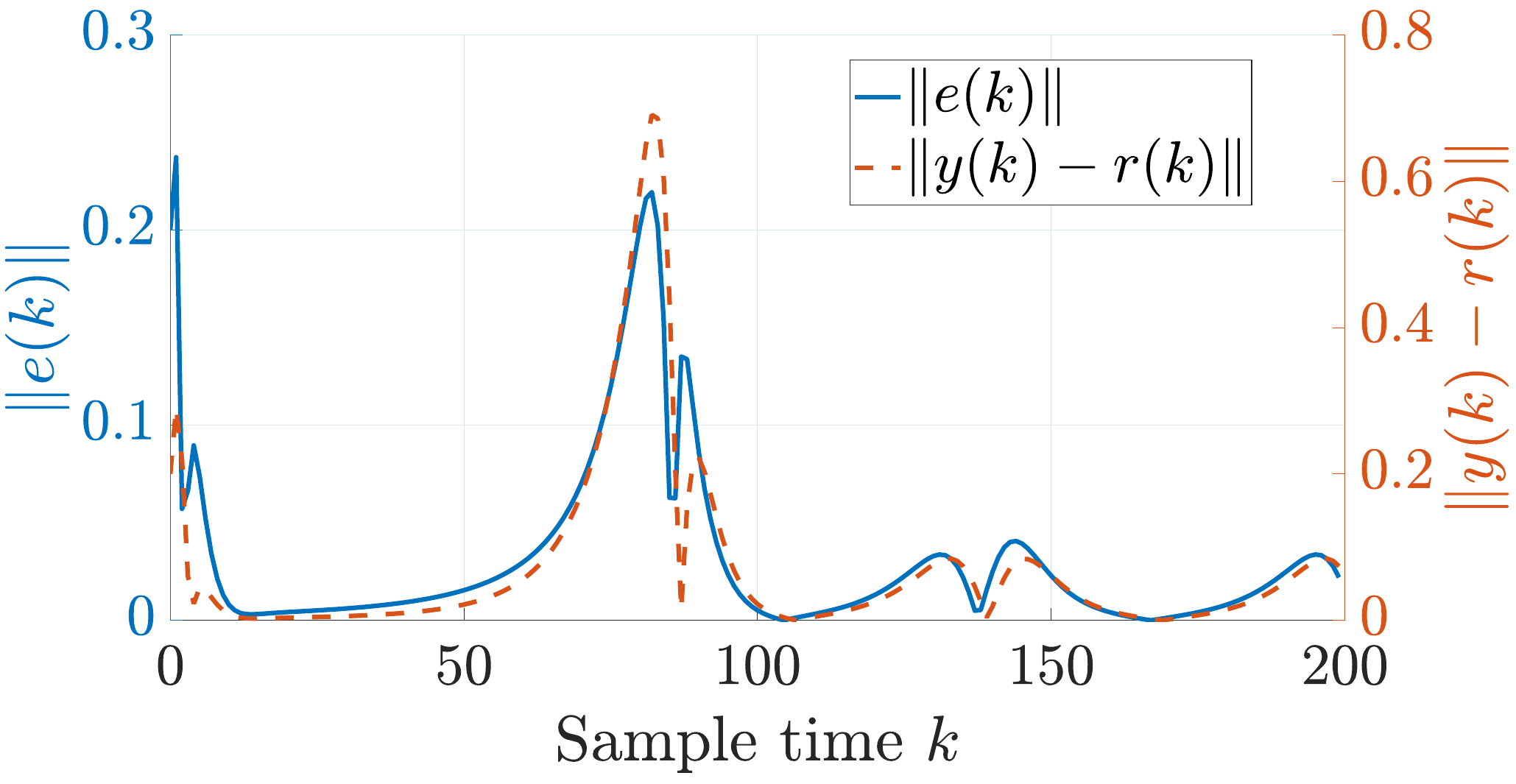}
            \end{minipage}} 
        \subfloat[FNN disturbance model]{ \label{fig:CSTR_gen:FNN}
            \begin{minipage}{0.33\linewidth}
                \includegraphics[width=0.98\linewidth, height = 0.2\textheight, keepaspectratio=true]{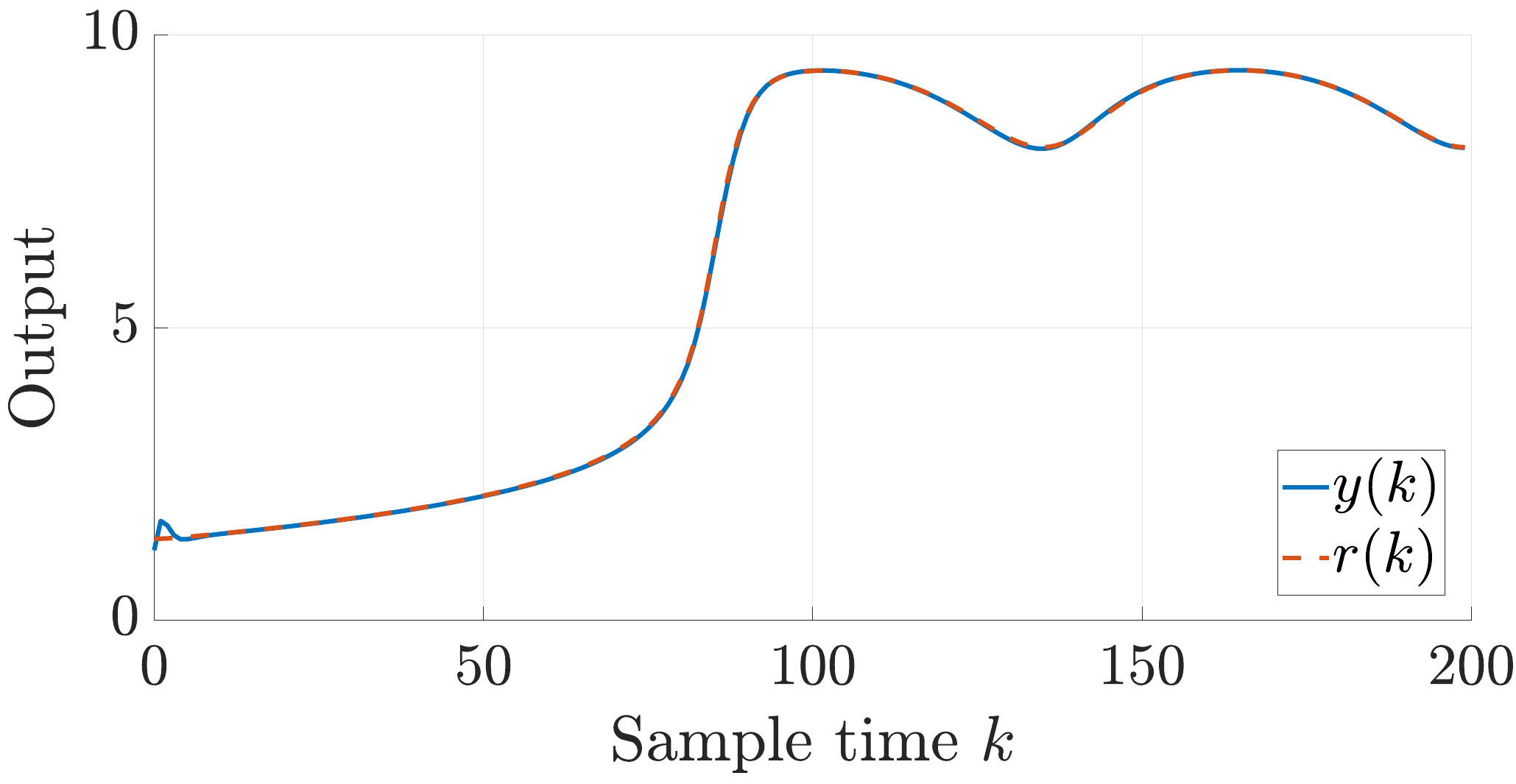}

                \includegraphics[width=0.98\linewidth, height = 0.2\textheight, keepaspectratio=true]{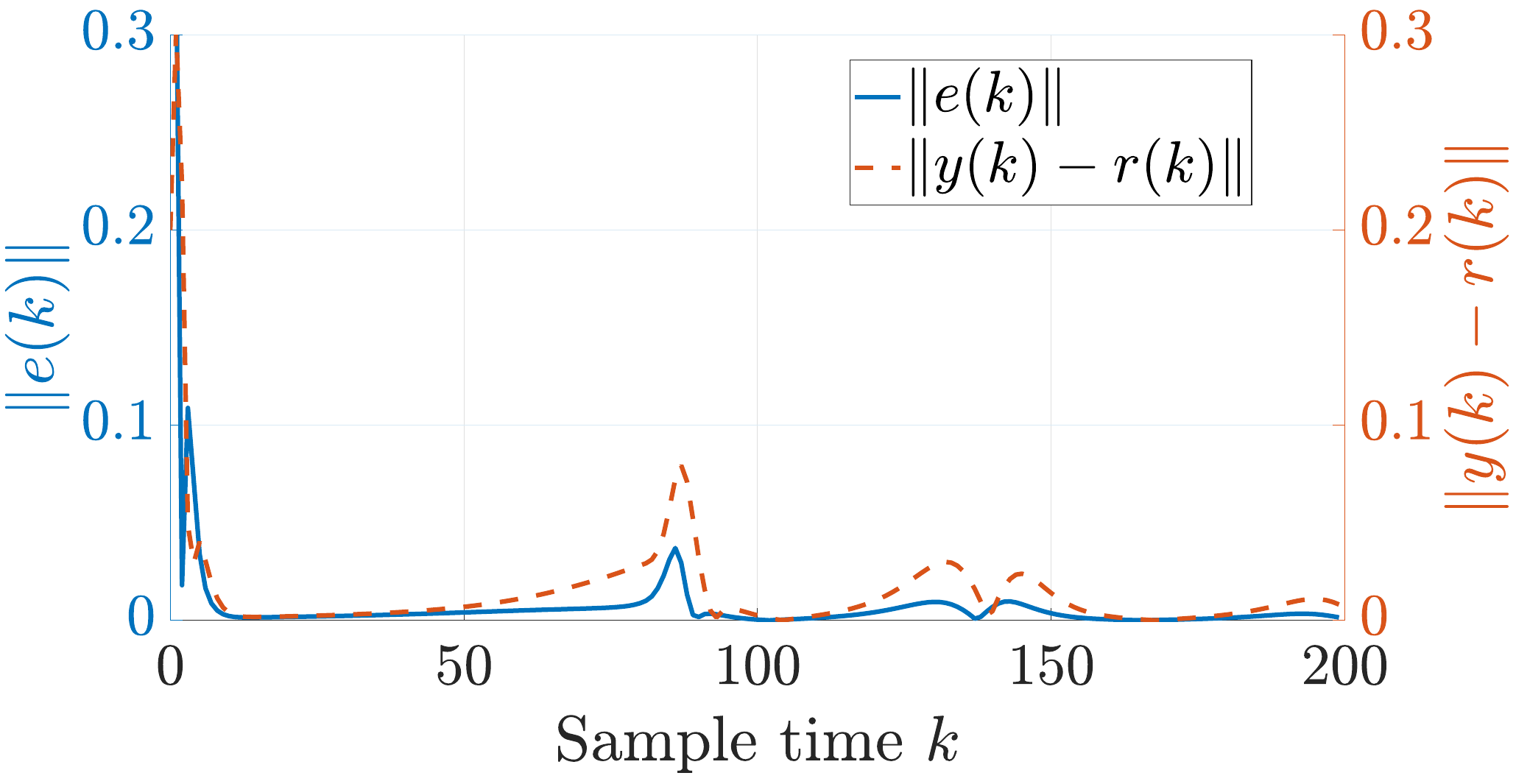}
            \end{minipage}} 
        \caption{CSTR plant. Closed-loop tracking of a generic reference trajectory using different disturbance models.}
        \label{fig:CSTR_gen}
\end{figure*}

This section contains numerical results showing the use of the proposed disturbance model for offset-free tracking of generic references.
We start with an academic example, where we show that a proper (but generally impractical) choice of the disturbance model leads to a prediction error $e(k)$ that vanishes to~$0$, leading to perfect offset-free tracking.
We then show a more realistic case study, where we showcase the good practical performance of the proposed disturbance scheme when compared to the classical constant disturbance model.
The results highlight the practical merits of the theoretical results presented in Section~\ref{sec:problem}, in that good offset-free results can be obtained even if the satisfaction of Assumptions~\ref{ass:perfect-modeling},~\ref{ass:contr} and~\eqref{eq:e(k)_asympt} is not guaranteed \emph{a priori}.

The results shown in this section use the MATLAB interface of CasADi \cite[version \texttt{3.6.0}]{casadi}, using IDAS \cite{IDAS} as the numerical integrator and IPOPT \cite{IPOPT} as the optimization~solver.

\subsubsection{Van der Pol oscillator} \label{sec:VanderPol}

We consider the Van der Pol oscillator system, whose dynamics are governed by the second-order Ordinay Differential Equation (ODE)
\begin{equation} \label{eq:VanderPol_cont}
    \frac{d^2 v}{d t^2} = \mu (1 - \beta v^2) \frac{d v}{d t} - v - \rho u,
\end{equation}
where $v \in \R$ is its position coordinate, $u \in \R$ its control input, and scalars $\mu = \beta = \rho = 1$ are its parameters.
We rewrite~\eqref{eq:VanderPol_cont} as an ODE $\frac{d x_\p}{d t} = F_\p(x_\p, u)$ by taking $x_\p = (\frac{d v}{d t}, v)$, and obtain \eqref{eq:plant} by taking $y_\p = v$ as the output and numerically integrating $F_\p(\cdot)$ with a sample time of $t_s = 0.5$ seconds, i.e., 
\begin{equation} \label{eq:integration:plant}
    x_\p(k+1) = \int_0^{t_s} F_\p(\xi(t), u(k)) d t, \; {\rm s.t.}\, \xi(0) = x_\p(k).
\end{equation}
To obtain the prediction model \eqref{eq:model:dxdy}, which is the particularization of model~\eqref{eq:model} for its use with the EKF, we first take $\tilde{F}_\p(\cdot)$ as a version of function $F_\p(\cdot)$ where the system parameters have been innecuratelly estimated to be $\mu = 0.8$, $\beta = 0.9$ and $\rho = 0.8$, and then numerically integrate $\tilde{F}_\p(\cdot) + h_x(x, u, \theta)$ as in~\eqref{eq:integration:plant}.
We consider three disturbance models:
\begin{enumerate}[label=(\textit{\roman*})]
    \item \textit{Constant Disturbance Model} (CDM).
        We take $y = v + d_y$, $h_x = 0$, $h_y = \theta$.
        This is the classical output disturbance model capable of achieving offset-free tracking for piecewise-constant references \cite{pannocchia2015, morari2012}.
        We include this model to show the limitations of this approach when considering non-constant reference trajectories.
    \item \textit{Polynomial Disturbance Model} (PDM).
        We take $h_y = 0$, and $h_x(x, u, \theta)= (\dot{v}_\theta, 0)$, where $\dot{v}_\theta$ is as a polynomial with terms capturing all possible terms of the ordinary differential equation of the plant, i.e., of \eqref{eq:VanderPol_cont}.
        That is, denoting $\dot{v} \doteq \frac{d v}{d t}$ and the $i$-th element of $\theta \in \R^{10}$ as $\theta_i$,
        \begin{align*}
            \dot{v}_\theta &= \theta_1 + \theta_2 \dot{v} + \theta_3 \dot{v}^2 + \theta_4 v + \theta_5 v^2 \\ 
                              &\quad + \theta_6 \dot{v} v + \theta_7 \dot{v}^2 v + \theta_8 \dot{v} v^2 + \theta_9 \dot{v}^2 v^2 + \theta_{10} u.
        \end{align*}
        The idea of this disturbance model is to guarantee the existence of a value of $\theta$ such that the prediction model \eqref{eq:model:dxdy} perfectly captures the real system dynamics \eqref{eq:VanderPol_cont}.
        Therefore, this disturbance model guarantees the satisfaction of Assumption~\ref{ass:perfect-modeling}.
        We include it to show how offset-free tracking may be achieved by a suitable selection of the disturbance model using the EKF and NMPC framework, although in a real setting the availability of such a disturbance model is unlikely.
    \item \textit{Feedforward Neural Network} (FNN).
        We take $h_x$ as a FNN with input $(x, u) \in \R^{\nx + \nu}$, output $d_x \in \R^\nx$, two hidden layers with $6$ neurons each, sigmoid activation function for the hidden layers, and linear activation function for the output layer.
        We also take $h_y$ as a similar FNN, but with input $x \in \R^\nx$, output $d_y \in \R^\ny$ and a single hidden layer with $4$ neurons.
        The parameters of the two FNNs, i.e., the weights and bias terms of the layers, are stacked in vectors $\theta_x$ and $\theta_y$, such that $\theta = (\theta_x, \theta_y)$.
        We initialize $\theta_x$ and $\theta_y$ using the Xavier initialization procedure \cite{Xavier2010} with zero bias terms.
        Inspired by \cite{bemporad_TAC_2023}, this disturbance model is essentially a recurrent neural network (RNN) that is trained online using the EKF to capture the discrepancy between the dynamics of the prediction model and the real system.
        We include this disturbance model to show the good tracking performance that can be obtained by using a FNN, which is a reasonable choice in a practical setting where there is limited information about the real system dynamics.
\end{enumerate}

We construct the NMPC controller \eqref{eq:nmpc} with a prediction horizon $N = 5$, taking a terminal equality constraint $x_N = x_\r(k+N)$ and using the classical stage cost function
\begin{align*}
    \ell(x, u, x_r, u_r) &= \| x - x_r \|^2_{W_x} + \| u - u_r \|^2_{W_u},
\end{align*}
with $W_x = 10 I_\nx$ and $W_u = I_\nu$.
The terminal cost $V_\mathrm{f}$ is not included because it is not needed due to the use of the terminal equality constraint.
We don't consider any constraints on the input nor on the output of the system.
At each sample time, we solve problem~\eqref{eq:ref_signals} taking a prediction horizon of length $N$ (instead of $\infty$), to obtain the future $N$ samples of the reference trajectories $x_\r(k+j)$ and $u_\r(k+j)$ for the NMPC \eqref{eq:nmpc}.
We take the stage cost function of~\eqref{eq:ref_signals} as~\eqref{eq:ref_signals:cost}, taking $u_d(k) = u_\r(k)$, where $u_\r(k)$ is the reference signal given by Assumption~\ref{ass:trackability}.

{
\setlength{\tabcolsep}{3pt}
    \begin{table}[t]
    \centering
    \begin{tabular}{c|ccccccc}
        \toprule
        Test & Dist. model & $\ndx$ & $\ndy$ & $\nth$ & $Q_x$ & $Q_y$ & $Q_\theta$ \\
        \midrule
         & CDM  & 0 & 1 & 1 & $I_\nx$ & $0.25 I_\ny$ & $I_\nth$ \\
       Fig.~\ref{fig:VanderPol_ss},~(\ref{fig:CSTR_gen})
         & PDM  & 2, (2) & 0 & 10, (7) & $I_\nx$ & $0.25 I_\ny$ & $I_\nth$ \\
         & FNN & 2 & 1 & 97 & $I_\nx$ & $0.25 I_\ny$ & $I_\nth$ \\
        \midrule
         & CDM  & 0 & 1 & 1 & $I_\nx$ & $0.25 I_\ny$ & $I_\nth$ \\
        Fig.~\ref{fig:VanderPol_gen} & PDM  & 2 & 0 & 10 & $10^{-10} I_\nx$ & $0.25 I_\ny$ & $50 I_\nth$ \\
                                     & FNN & 2 & 1 & 97 & $10^{-10} I_\nx$ & $0.25 I_\ny$ & $50 I_\nth$ \\
        \bottomrule
    \end{tabular}
    \caption{Dimensions and EKF parameters of each test.}
    \label{tab:EKF}
\end{table}
}

We perform two tests for each of the above disturbance models, one using a piecewise-constant reference and another for a generic reference trajectory.
The results are shown, respectively, in Fig.~\ref{fig:VanderPol_ss} and \ref{fig:VanderPol_gen}.
Table~\ref{tab:EKF} shows the parameters of the EKFs~\eqref{eq:ekf} used in each of the tests.
Fig.~\ref{fig:VanderPol_ss} shows that all three disturbance models achieve offset-free tracking of piecewise-constant references.
This is a well-known result in the case of the CDM \cite{pannocchia2015}, which does not hold for non-constant reference trajectories as shown in Fig.~\ref{fig:VanderPol_gen:const}.
On the other hand, Fig.~\ref{fig:VanderPol_gen:poly} shows how the proposed disturbance model is capable of achieving offset-free tracking if Assumptions~\ref{ass:trackability}--\ref{ass:contr} are satisfied in practice and $e(k) \to 0$.
Indeed, the parameters $\theta$ of the disturbance model converge to $\theta = (0, 0.2, 0, 0, 0, 0, 0, -0.28, 0, 0.2)$,
which is the value for which the prediction model \eqref{eq:model} is equivalent to the real plant~\eqref{eq:plant}.
We note that in a real setting, a polynomial disturbance model capable of capturing the exact system dynamics will generally not be available.
In this case, the use of a more general disturbance model, such as a FNN, still provides better reference tracking than a simple CDM, as illustrated in Fig.~\ref{fig:VanderPol_gen:FNN}.

\subsubsection{Continuous stirred tank reactor} \label{sec:results:CSTR}

We now consider the Continuous Stirred Tank Reactor (CSTR) system from the MPC toolbox for MATLAB~\cite{MPC_toolbox}.
The two states of the system are the temperature of the reactor $T_r$ and the concentration $C_A$ of the reactant $A$, the input is the temperature of the coolant $T_c$ and the output is $C_A$.
The control objective is to make $C_A$ track a given reference trajectory.
We obtain a discrete-time plant model~\eqref{eq:plant} by integrating its ODE as in~\eqref{eq:integration:plant} with $t_s = 0.5$ seconds.
As in Section~\ref{sec:VanderPol}, we take the prediction model \eqref{eq:model:dxdy} by changing some of the parameters of the plant model~\eqref{eq:plant}.
We consider the same disturbance models described in Section~\ref{sec:VanderPol}, although in this case the PDM is taken as a copy of the plant model \eqref{eq:plant} with parameters collected in $\theta$.
We also take the same parameters for the reference generator \eqref{eq:ref_signals} and NMPC controller \eqref{eq:nmpc}, with the exception of $W_x$, which we take as $W_x = \texttt{diag}(1, 0.1)$.

Fig.~\ref{fig:CSTR_gen} shows the closed-loop results of the CSTR system tracking a generic reference trajectory.
The parameters of the EFK are shown in Table~\ref{tab:EKF}.
Fig.~\ref{fig:CSTR_gen:poly} shows that the use of a simple disturbance model designed to capture the discrepancy between the real plant and the prediction model can lead to near-perfect reference tracking.
The CDM provides good results when the reference changes slowly, as seen in the first $50$ sample times of Fig.~\ref{fig:CSTR_gen:const}.
However, its performance degrades significantly otherwise.
Finally, once again, Fig.~\ref{fig:CSTR_gen:FNN} shows how a FNN can provide very good tracking results.

\vspace*{-0.1em}
\section{Conclusions and future research directions} \label{sec:no_guarantees}

We have presented sufficient conditions for offset-free tracking of time-varying references when considering a nonlinear system with unknown dynamics controlled by a nonlinear controller and observer.
Offset-free tracking is achieved by considering a nonlinear disturbance model whose parameters are learned online by the observer. By combining a reasonably-designed NMPC with EKF for state and parameter estimation, we have shown that good tracking results can be obtained even if all the theoretical assumptions are not fully satisfied.

Many research issues have been left open that require future investigations.
First, in order to guarantee offset-free tracking, it would be important to characterize EKF and NMPC schemes that, contrarily to the ones we have used as described in Section~\ref{sec:NMPC_EKF}, guarantee the satisfaction of all the assumptions presented throughout Section~\ref{sec:problem}; in particular an NMPC scheme that is robust to vanishing perturbations
in spite of the time-varying nature of the reference and prediction model. 
Second, one should propose disturbance models that satisfy Assumption~\ref{ass:perfect-modeling} by design.
Finally, some of the assumptions made in~Section~\ref{sec:problem} could be relaxed; in particular,
the assumption of convergence to zero of the prediction error made in Theorem~\ref{th:main} 
may be difficult to guarantee, even if Assumptions~\ref{ass:trackability}-\ref{ass:contr} are satisfied,
due to the lack, in general, of a separation principle between NMPC and EKF design.

\vspace*{-0.1em}
\bibliographystyle{ieeetr}
\bibliography{bibliography}

\end{document}